\documentclass[journal,twoside]{IEEEtran}

\usepackage{amsmath,amssymb,amsthm,mathtools}
\usepackage{booktabs,array,multirow}
\usepackage{float,enumitem}
\usepackage{tikz,pgfplots}
\pgfplotsset{compat=1.18}
\usepgfplotslibrary{groupplots,fillbetween}
\usetikzlibrary{arrows.meta,positioning,fit,calc,shapes.geometric,decorations.pathreplacing}
\usepackage{xcolor}
\usepackage{cite}
\usepackage{url}
\usepackage[hidelinks]{hyperref}

\newtheorem{theorem}{Theorem}
\newtheorem{proposition}[theorem]{Proposition}
\newtheorem{corollary}[theorem]{Corollary}

\newtheorem{example}{Example}

\newcommand{\F}{\mathbb{F}}
\newcommand{\calR}{\mathcal{R}}
\newcommand{\calS}{\mathcal{S}}
\newcommand{\calC}{\mathcal{C}}

\newcommand{\calA}{\mathcal{A}}
\newcommand{\wt}{\operatorname{wt}}
\newcommand{\rk}{\operatorname{rank}}
\newcommand{\im}{\operatorname{im}}
\newcommand{\Ann}{\operatorname{Ann}}
\newcommand{\qcode}[3]{[\![#1,#2,#3]\!]}
\newcommand{\ds}{d_{\mathrm S}}
\newcommand{\wamb}{w_{\mathrm{amb}}}
\newcommand{\eeff}{\varepsilon_{\mathrm{eff}}}

\begin{document}

\title{Metachecks in Bivariate Bicycle Codes: Syndrome Distance, Measurement Faults, and Repair Limits}

\author{Mohammad Rowshan,~\IEEEmembership{Member,~IEEE}
\thanks{Mohammad Rowshan is with the Centre for Quantum Software and Information (QSI), School of Computer Science, University of Technology Sydney, Ultimo, New South Wales 2007, Australia (e-mail: mrowshan@ieee.org). This work was initially conducted while he was with the School of Electrical Engineering and Telecommunications, University of New South Wales (UNSW), Kensington, New South Wales 2052, Australia. 
}}

\maketitle

\begin{abstract}
Faulty syndrome measurements can corrupt an otherwise correct quantum-error
correction step. Bivariate bicycle (BB) codes contain dependent stabilizer
checks, so every valid syndrome obeys additional parity constraints, or
metachecks. We study how far this built-in redundancy can identify measurement
faults and when the remaining ambiguity is unavoidable, while separately
checking the code's logical structure. A logical decomposition is used as a preliminary safety check: it identifies a
$k/2$-dimensional annihilator subspace and a $k/2$-dimensional
colon quotient, and the minimum-weight logical need not be visible from the
annihilator side alone. On the measurement side, translation symmetry partitions
syndrome locations into classes that carry identical metacheck information.
This gives an exact characterization of the leading single-fault ambiguity, a
bound on how many fault locations can be distinguished, and a family-level
repair limit when the encoded dimension stays bounded while the block length
grows. Under a static-data assumption, the same calculation gives the minimum
number of checks that must be remeasured to remove every single-fault
ambiguity. Exact finite-code calculations illustrate both regimes: all single
measurement faults are distinguishable in a 72-qubit BB code, whereas the
144-qubit Gross code merges its 72 syndrome locations into 36 indistinguishable
pairs. In a 108-qubit example, one logical component first appears at weight 12
while the other contains a weight-10 logical. Sustained phenomenological
experiments show that joint data--measurement decoding is more robust
than a separated repair stage on the more ambiguous codes. The resulting tests
apply to general two-block BB codes, including non-coprime periods and
repeated-root cases.
\end{abstract}

\begin{IEEEkeywords}
Bivariate bicycle codes, quantum low-density parity-check codes, syndrome
redundancy, syndrome repair, metachecks, measurement-fault ambiguity, single-shot decoding, fault-tolerant quantum error correction.
\end{IEEEkeywords}

\section{Introduction}
\label{sec:introduction}

Repeated syndrome extraction is the usual safeguard against faulty check
measurements, but extra rounds add latency and enlarge the decoding problem.
Redundant stabilizer measurements offer another source of information: every
linear dependency among the checks becomes a parity constraint on a valid
syndrome. Such constraints underlie redundant-syndrome and data-syndrome
constructions~\cite{Fujiwara2014,Ashikhmin2020} and are closely related to
fault-tolerant correction with fewer measurement rounds
\cite{Bombin2015,Campbell2019,Delfosse2022}.

For a BB code the number of independent metachecks is fixed once the encoded
dimension is fixed. The practical question is what those metachecks can tell us.
Two faulty measurement locations may produce exactly the same metacheck
information, and a wrong repair can leave a residual error for the data decoder.
At the same time, the algebra that creates redundant checks can also create short
logical operators. We therefore check the logical structure first, characterize
which measurement faults can be distinguished second, and then evaluate the
resulting decoder behavior.

BB codes are particularly suitable for this analysis because two bivariate
polynomials determine both their stabilizer matrices and their translation
symmetries~\cite{Bravyi2024,LinPryadko2024}. The paper follows the workflow in
Fig.~\ref{fig:motivation}: a logical safety check and a syndrome-redundancy
calculation are performed before the decoder experiments are interpreted.

\begin{figure*}[t]
\centering
\resizebox{0.82\textwidth}{!}{%
\begin{tikzpicture}[
  >=Stealth,font=\scriptsize,
  box/.style={draw,rounded corners=2pt,align=center,minimum height=0.70cm,
              text width=3.25cm,inner sep=3pt},
  arr/.style={->,thick}
]
\node[box,fill=blue!8] (start) at (0,0)
  {Candidate bivariate bicycle code};
\node[box,fill=orange!9] (logical1) at (-3.7,-1.25)
  {Logical safety check};
\node[box,fill=orange!5] (logical2) at (-3.7,-2.45)
  {logical components and\\distance obstructions};
\node[box,fill=green!9] (syn1) at (3.7,-1.25)
  {Syndrome redundancy};
\node[box,fill=green!5] (syn2) at (3.7,-2.45)
  {fault distinguishability, repair\\limits, and partial remeasurement};
\node[box,fill=gray!10,text width=4.5cm] (audit) at (0,-3.65)
  {Finite-code characterization};
\node[box,fill=cyan!7,text width=4.5cm] (eval) at (0,-4.75)
  {Exact repair and sustained-memory decoding};
\draw[arr] (start)--(logical1); \draw[arr] (start)--(syn1);
\draw[arr] (logical1)--(logical2); \draw[arr] (syn1)--(syn2);
\draw[arr] (logical2.south) |- (audit.west);
\draw[arr] (syn2.south) |- (audit.east);
\draw[arr] (audit)--(eval);
\end{tikzpicture}}
\caption{Workflow of the paper. The logical calculation checks that a candidate
code does not hide a short logical operator. The syndrome calculation then asks
which measurement faults its redundant checks can distinguish. Exact and
sampled decoder experiments are interpreted only after these two finite-code
properties are known.}
\label{fig:motivation}
\end{figure*}

\subsection{Relation to Prior Work and Scope}

Redundant syndrome measurements have a long history in stabilizer error
correction~\cite{Fujiwara2014,Ashikhmin2020}. For quantum low-density
parity-check (qLDPC) codes, single-shot and few-shot ideas have been developed
for product codes and more recent code families
\cite{Quintavalle2021,Higgott2023,Scruby2026,Jacob2025}; generalized bicycle
and multicycle constructions use algebraic syndrome redundancy more directly
\cite{Lin2025,LinMulti2026,Mian2026}. Belief propagation (BP) with
ordered-statistics decoding (OSD), denoted BP+OSD below, is a standard qLDPC
decoding method~\cite{Panteleev2021}; earlier finite-rate
generalized-bicycle constructions include~\cite{Kovalev2013}.

Logical operators of BB codes have also been studied through explicit bases,
automorphisms, group algebras, and spectral methods
\cite{Eberhardt2025,LinPryadko2024,Postema2026,Sabo2026}. The annihilator--colon decomposition, its dimension balance, the sector
correspondence, and the certified component minima used in
Section~\ref{sec:logical} are established in
\cite[Thm.~1, Lem.~1, Cor.~2, Prop.~3]{RowshanDevitt2026} and the certified component minima \cite[Thm.~4, Table~VII]{RowshanDevitt2026}. 
The coprime-period syndrome
analysis in~\cite{Rowshan2026ISIT} uses a cyclic reduction to a univariate
description. The present work removes that restriction and adds the
translation-orbit and quotient-algebra descriptions of measurement ambiguity,
finite-code ambiguity costs, partial repetition, and sustained decoder
comparisons. The treatment therefore includes non-coprime periods and
repeated-root cases.

\subsection{Main Results}

The logical calculation serves as a safety check before syndrome redundancy is
interpreted. A BB logical operator may be easy to see on one physical block, or
it may become lighter only after both blocks are used. The decomposition in
Section~\ref{sec:logical} identifies a $k/2$-dimensional annihilator subspace
and a $k/2$-dimensional colon quotient.  The two associated minima must both be
checked when using the logical structure as a safety test. The 108-qubit example makes the distinction concrete:
the component containing the obvious one-block witnesses first appears at
weight 12, while the complementary component contains a weight-10 logical. A
one-block search alone would therefore report the wrong distance.

The metachecks are analyzed in Section~\ref{sec:ds2} by grouping
syndrome locations according to translation symmetry. Locations in the same
group produce identical metacheck information for a single measurement fault.
This gives an exact criterion for when the valid-syndrome code contains a
weight-two word, counts the single-fault locations that a metasyndrome-only
repair stage must confuse, and yields a bound that becomes restrictive when the
encoded dimension stays bounded as the lattice grows. Under a static-data
assumption, the same grouping gives the smallest subset of checks that must be
measured a second time to remove every single-fault ambiguity.

Section~\ref{sec:soundness} then assigns each leading ambiguity the
minimum data-error weight that can produce the corresponding residual syndrome.
This separates the frequency of repair ambiguity from the difficulty of the
data error left behind. The finite-code evaluation in Section~\ref{sec:audit} and the
the numerical results in Section~\ref{sec:num_results} compare these predictions with
minimum-weight repair, joint data--measurement decoding, and repeated
space--time decoding. Section~\ref{sec:preliminaries} introduces the
notation and a small running example used throughout the theoretical
development.

\section{Notation and BB Syndrome Structure}
\label{sec:preliminaries}

Let $l$ and $m$ denote the two lattice periods and set $N=lm$. The
translation group is $G=\mathbb Z_l\times\mathbb Z_m$, and the associated
binary group ring is
\[
 \calR=\F_2[x,y]/(x^l-1,y^m-1).
\]
A BB code has $n=2N$ physical qubits, divided into two $N$-qubit blocks, and is
specified by $a,b\in\calR$. We identify the elements of $G$ with the monomials
of $\calR$ and use the monomial order
$1,x,\ldots,x^{l-1},y,xy,\ldots,x^{l-1}y,\ldots$. For $c\in\calR$,
$c^*(x,y):=c(x^{-1},y^{-1})$ denotes the reciprocal involution and $L_c$ is
the $N\times N$ binary matrix for left multiplication by $c$ in this ordered
basis. The Hamming weight of a binary vector or ring element is $\wt(\cdot)$.

We also use three standard ring operations. The notation $(c)$ denotes the
principal ideal generated by $c$,
 
\begin{equation*}
 \Ann(c):=\{r\in\calR:cr=0\}
\end{equation*}
is its annihilator, and
\begin{equation}
 (c:d):=\{r\in\calR:dr\in(c)\}
 \label{eq:colon_def}
\end{equation}
is the colon ideal. For translations $g_1,\ldots,g_r\in G$,
$\langle g_1,\ldots,g_r\rangle$ denotes the subgroup they generate. Thus
$\langle x^6\rangle$, for example, means the subgroup obtained from repeated
application of the translation $x^6$. The image and kernel of a linear map are
written $\im$ and $\ker$, and $\rk$ denotes rank over $\F_2$.

\begin{example}[Running example: a $3\times3$ BB torus]
\label{ex:running}
Take $l=m=3$ and
\begin{equation}
 a=1+x+x^2,\qquad b=1+y+y^2.
 \label{eq:running_ab}
\end{equation}
Then $N=9$ and the code has $n=18$ physical qubits. Let $J_3$ denote the
$3\times3$ all-one binary matrix and $I_3$ the identity. With the monomial
ordering above,
\begin{equation}
 L_a=I_3\otimes J_3,\qquad L_b=J_3\otimes I_3,
 \label{eq:running_mats}
\end{equation}
where $\otimes$ is the Kronecker product. These two small matrices will be used
throughout the next sections to illustrate the general definitions before they
are applied to the larger BB codes.
\end{example}

\begin{figure}[t]
\centering
\begin{tikzpicture}[font=\scriptsize,scale=0.78]
  \foreach \i in {0,1,2}{
    \foreach \j in {0,1,2}{
      \coordinate (q\i\j) at (1.05*\i,1.05*\j);
      \draw[gray!35] (q\i\j) circle (0.27);
    }
  }
  \foreach \j in {0,1,2}{\draw[gray!25] (q0\j)--(q1\j)--(q2\j);}
  \foreach \i in {0,1,2}{\draw[gray!25] (q\i0)--(q\i1)--(q\i2);}
  \draw[blue!65,line width=2.2pt] (q00)--(q10)--(q20);
  \draw[orange!80!black,line width=2.2pt] (q00)--(q01)--(q02);
  \foreach \i/\j/\lab in {0/0/$1$,1/0/$x$,2/0/$x^2$,0/1/$y$,1/1/$xy$,2/1/$x^2y$,0/2/$y^2$,1/2/$xy^2$,2/2/$x^2y^2$}{
    \node[fill=white,inner sep=1pt] at (q\i\j) {\lab};
  }
  \node[blue!65,anchor=west] at (3.0,1.95) {$a=1+x+x^2$};
  \node[orange!80!black,anchor=west] at (3.0,1.42) {$b=1+y+y^2$};
  \node[align=left,anchor=west,text width=3.3cm] at (3.0,0.35)
    {The two generators are horizontal and vertical 3-cycles on the torus. 
     Their translates generate all checks.};
\end{tikzpicture}
\caption{Geometry of the running BB$(3,3)$ example.  The highlighted supports
of $a$ and $b$ meet at the identity and wrap periodically on the $3\times3$
torus.  The calculations below use the multiplication matrices in
\eqref{eq:running_mats}; the picture only makes their translation structure
visible.}
\label{fig:running_torus}
\end{figure}

We use the Calderbank--Shor--Steane (CSS) stabilizer formalism throughout.
Table~\ref{tab:notation} collects the notation used repeatedly below; all
vectors and matrices are binary unless stated otherwise.

\begin{table*}[t]
\centering
\caption{Notation used in the paper.}
\label{tab:notation}
\renewcommand{\arraystretch}{1.10}
\begin{tabular}{@{}clp{10.5cm}@{}}
\toprule
Symbol & Type & Meaning \\
\midrule
$l,m,N$ & integers & BB lattice periods and group-ring dimension $N=lm$ \\
$G$ & group & translation group $\mathbb Z_l\times\mathbb Z_m$, identified with monomials of $\calR$ \\
$\langle g_1,\ldots,g_r\rangle$ & subgroup & subgroup of $G$ generated by the listed translations \\
$\calR$ & ring & $\F_2[x,y]/(x^l-1,y^m-1)$ \\
$(c)$, $(c:d)$ & ideals & principal ideal generated by $c$ and the colon ideal in~\eqref{eq:colon_def} \\
$\Ann(c)$ & ideal & annihilator $\{r\in\calR:cr=0\}$ \\
$c^*$ & ring element & reciprocal involution $c^*(x,y)=c(x^{-1},y^{-1})$ \\
$L_c$ & $N\times N$ matrix & left multiplication by $c\in\calR$ in the monomial basis \\
$H_X,H_Z$ & binary matrices & CSS check matrices \\
$H,H_\perp$ & binary matrices & selected syndrome matrix and complementary CSS matrix \\
$\mathcal L_X$ & quotient & selected $X$-logical space $\ker H/\im(H_X^\top)$ \\
$\mathcal L_{\rm ann},\mathcal L_{\rm col}$ & quotient spaces & annihilator kernel and colon quotient in Proposition~\ref{prop:logical_decomp} \\
$d_{\rm ann},d_{\rm col}$ & integers & minimum logical weights in the two components \\
$\calC_M$ & subspace & metacheck space $\ker H^\top$ \\
$M$ & binary matrix & full-row-rank generator of $\calC_M$, with $MH=0$ \\
$\calS$ & classical code/ideal & valid-syndrome space $\im H=\ker M$ \\
$\calA$ & algebra & syndrome quotient $\calR/\calS$, of dimension $k/2$ \\
$\calA^\times$ & group & unit group of $\calA$, i.e., its invertible residue classes \\
$r_M$ & integer & $\dim\calC_M=k/2$ \\
$\ds$ & integer & minimum nonzero Hamming weight in $\calS$ \\
$K_M$ & subgroup & translations leaving every metacheck invariant \\
$u_1$ & integer & unavoidable single-fault mis-reconstructions for metasyndrome-only exact repair \\
$\wamb$ & integer & lightest data error producing a weight-two ambiguous syndrome \\
$\bar\mu_1$ & real number & average minimum data weight across nontrivial single-fault orbit ambiguities \\
$\eeff$ & probability & effective per-round block logical-failure rate derived from an $R$-round memory experiment \\
\bottomrule
\end{tabular}
\end{table*}

\subsection{CSS convention and selected error sector}\label{sec:css_sector}

A CSS code is specified by $H_X$ and $H_Z$ with
$H_XH_Z^\top=0$~\cite{CalderbankShor1996,Steane1996,Gottesman1997}.  We decode
$X$ errors throughout, so $Z$-type checks produce the measured syndrome.  For
$a,b\in\calR$ the BB matrices are
\begin{equation}
 H_X=[L_a\mid L_b],\qquad H_Z=[L_b^\top\mid L_a^\top].
 \label{eq:bbchecks}
\end{equation}
Commutativity gives $H_XH_Z^\top=L_aL_b+L_bL_a=0$.  Let $P$ be the permutation
matrix of the antipode $g\mapsto g^{-1}$ on $G$.  Then
$PL_cP=L_{c^*}=L_c^\top$, so
$H_Z=P\,[L_b\mid L_a]\,(P\oplus P)$, and $[L_b\mid L_a]$ is a column
permutation of $H_X$.  The two check matrices therefore have equal rank, and
\begin{equation}
 k=2N-2\rk(H_Z).
 \label{eq:k}
\end{equation}
We set
\begin{equation}
 H:=H_Z=[L_{b^*}\mid L_{a^*}],\qquad H_\perp:=H_X.
 \label{eq:sector}
\end{equation}
An $X$ error $e\in\F_2^{2N}$ produces ideal syndrome $He$ and is harmless
exactly when $e\in\im(H_\perp^\top)$.

For the running BB$(3,3)$ example, $a^*=a$ and $b^*=b$, so
$H=[L_b\mid L_a]$ is a $9\times18$ matrix of rank five.  Equation~\eqref{eq:k}
therefore gives $k=18-2\times5=8$.  Only five of the nine syndrome bits are
independent; the remaining four are redundant parity relations.  Those four
relations are the metachecks introduced next.

\subsection{Syndrome code and metachecks}

The ideal syndromes form
\begin{equation}
 \calS:=\im H\subseteq\F_2^N,
 \label{eq:syndrome_code}
\end{equation}
and the metacheck space is
\begin{equation}
 \calC_M:=\ker H^\top.
 \label{eq:meta_space}
\end{equation}
A full-row-rank metacheck matrix $M$ has row space $\calC_M$, so
$\ker M=\calS$.  If $\widetilde s=He+\xi$ is a measured syndrome with
measurement-error vector $\xi$, then
\begin{equation}
 M\widetilde s=M\xi.
 \label{eq:metasyndrome}
\end{equation}
Thus the data error cancels from the metasyndrome.
Figure~\ref{fig:complex} summarizes how the BB polynomials induce the syndrome code and, from it, the metacheck space used for syndrome repair.

\begin{figure}[t]
\centering
\resizebox{0.98\columnwidth}{!}{%
\begin{tikzpicture}[
    >=Stealth,
    font=\scriptsize,
    node distance=7mm and 9mm,
    box/.style={
        draw,
        rounded corners=2pt,
        align=center,
        minimum height=8mm,
        inner sep=3pt
    },
    arr/.style={->,thick},
    lab/.style={font=\scriptsize,fill=white,inner sep=1pt}
]

\node[
    box,
    fill=blue!8,
    text width=2.55cm
] (poly)
{
    BB generators\\[1mm]
    $a,b\in\calR$\\[-0.5mm]
    $\calR=\F_2[x,y]/(x^l-1,y^m-1)$
};

\node[
    box,
    fill=blue!6,
    text width=2.55cm,
    right=of poly
] (mult)
{
    multiplication matrices\\[1mm]
    $L_a,\;L_b\in\F_2^{N\times N}$\\[-0.5mm]
    $N=lm$
};

\node[
    box,
    fill=blue!4,
    text width=2.55cm,
    right=of mult
] (H)
{
    selected CSS check matrix\\[1mm]
    $H=[L_{b^*}\mid L_{a^*}]$
};

\draw[arr] (poly) -- (mult);
\draw[arr] (mult) -- (H);

\node[
    box,
    fill=green!9,
    text width=3.05cm,
    below left=10mm and 2mm of H
] (S)
{
    syndrome code\\[1mm]
    $\calS=\im H$\\[-0.5mm]
    $=(a^*,b^*)\subseteq\F_2^N$
};

\node[
    box,
    fill=orange!10,
    text width=3.05cm,
    below right=10mm and 2mm of H
] (CM)
{
    metacheck space\\[1mm]
    $\calC_M=\ker H^\top$\\[-0.5mm]
    $=\Ann(a)\cap\Ann(b)$
};

\draw[arr] (H.south west)
    -- node[lab,left] {$\im H$}
    (S.north);

\draw[arr] (H.south east)
    -- node[lab,right] {$\ker H^\top$}
    (CM.north);

\node[
    box,
    fill=green!5,
    text width=3.05cm,
    below=8mm of S
] (meas)
{
    measured syndrome\\[1mm]
    $\widetilde s=He+\xi$\\[-0.5mm]
    $He\in\calS$
};

\node[
    box,
    fill=orange!6,
    text width=3.05cm,
    below=8mm of CM
] (M)
{
    choose a full-row-rank basis\\[1mm]
    $\operatorname{row}(M)=\calC_M$\\[-0.5mm]
    $M\in\F_2^{r_M\times N}$
};

\draw[arr] (S) -- node[lab,right] {$+\xi$} (meas);
\draw[arr] (CM) -- node[lab,right] {choose basis} (M);

\node[
    box,
    fill=red!6,
    text width=4.15cm,
    below=11mm of $(meas)!0.5!(M)$
] (meta)
{
    metasyndrome\\[1mm]
    $M\widetilde s=M(He+\xi)=M\xi$\\[-0.5mm]
    $\in\F_2^{r_M}$ because $MH=0$
};

\draw[arr] (meas.south)
    |- node[lab,pos=0.70,below] {apply $M$}
    (meta.west);

\draw[arr] (M.south)
    |- node[lab,pos=0.70,below] {defines the map}
    (meta.east);

\end{tikzpicture}%
}
\caption{Extraction of the syndrome code and metachecks from a BB code.
The generators $a$ and $b$ determine the multiplication matrices and hence
the selected CSS check matrix $H$.  Its image is the valid-syndrome code
$\calS$, whereas the kernel of $H^\top$ gives the metacheck space
$\calC_M$.  Choosing a basis of $\calC_M$ gives the metacheck matrix $M$.
For a noisy measurement $\widetilde s=He+\xi$, the relation $MH=0$ removes
the data contribution, leaving the metasyndrome $M\widetilde s=M\xi$.}
\label{fig:complex}
\end{figure}

For the running BB$(3,3)$ example of Example~\ref{ex:running}, set $c=(1+x)(1+y)$. Direct multiplication gives
$ac=bc=0$, and the four translates
$c,xc,yc,xyc$ are independent. They form a basis of the metacheck space. In
the monomial ordering of Section~\ref{sec:preliminaries}, one convenient
metacheck matrix is
\begin{equation}
M=\begin{bmatrix}
1&1&0&1&1&0&0&0&0\\
0&1&1&0&1&1&0&0&0\\
0&0&0&1&1&0&1&1&0\\
0&0&0&0&1&1&0&1&1
\end{bmatrix}.
\label{eq:running_M}
\end{equation}
It has four independent rows and satisfies $MH=0$. Hence the nine measured
syndrome bits are checked by four additional parity relations.  A single
measurement fault at coordinate $h$ produces the $h$th column of $M$: for
example, faults at $1$ and $x$ give $(1,0,0,0)^\top$ and $(1,1,0,0)^\top$,
respectively.  All nine columns are different.  There are $2^4=16$ possible
four-bit metasyndromes: one has a minimum representative of weight zero, nine
have a minimum representative of weight one, and six first appear at weight
two. The lightest nonzero valid syndrome has weight three;
Section~\ref{sec:ds2} will call this quantity the syndrome distance.
Thus every single measurement fault in this example is distinguishable from
the other eight using metacheck information alone.

The BB dimension formula also fixes the amount of check redundancy.  In the
present notation, the standard identity of
Refs.~\cite{Bravyi2024,Rowshan2026ISIT} is
\begin{equation}
 r_M:=\dim\ker H^\top=N-\rk(H)=\frac{k}{2}.
 \label{eq:rm}
\end{equation}
It follows directly from~\eqref{eq:k} and rank--nullity.  Thus an encoded BB
code already has metachecks; the design question is how well their columns
separate measurement faults and what data errors accompany the remaining
collisions.

The ring form will be used repeatedly:
\begin{align}
 \calC_M&=\ker L_a\cap\ker L_b=\Ann(a)\cap\Ann(b),\notag\\
 \calS&=\im L_{a^*}+\im L_{b^*}=(a^*,b^*).
 \label{eq:meta_syndrome_ideals}
\end{align}
The first equality describes all metachecks; the second describes all valid
syndrome words.  Both are ideals of $\calR$ and hence invariant under every
translation in $G$.

\subsection{Matrix and group-ring descriptions}
\label{sec:twoviews}

The calculations below use ordinary binary matrices, while the group-ring form
exposes the annihilators, ideals, and translations behind them.
Table~\ref{tab:twoviews} is a dictionary between the two descriptions.

\begin{table*}[t]
\centering
\caption{Linear-algebra and group-ring descriptions of the main objects.}
\label{tab:twoviews}
\renewcommand{\arraystretch}{1.08}
\resizebox{\textwidth}{!}{%
\begin{tabular}{@{}p{2.7cm}p{5.0cm}p{5.1cm}p{3.25cm}@{}}
\toprule
Object & Linear-algebra view & Group-ring view & Role \\
\midrule
Generator action & multiplication matrix $L_c$ & multiplication by $c\in\calR$ & builds the BB checks \\
Selected logical space & $\ker H/\im(H_X^\top)$ & syzygies $b^*u+a^*v=0$ modulo $(a^*r,b^*r)$ & full $X$-logical quotient \\
Annihilator component & kernel of the projection in Proposition~\ref{prop:logical_decomp} & $\Ann(b^*)/a^*\Ann(b^*)$ & classes admitting a left-block representative \\
Colon quotient & preimage quotient under $L_{a^*}$ & $(b^*:a^*)/(b^*)$ & complementary logical coordinate \\
Metacheck space & $\ker H^\top$ & $\Ann(a)\cap\Ann(b)$ & syndrome parity relations \\
Valid syndromes & $\im H=\ker M$ & ideal $(a^*,b^*)$ & syndrome code $\calS$ \\
Syndrome quotient & $\F_2^N/\im H$ & $\calA=\calR/(a^*,b^*)$ & metasyndrome state space \\
Distance-two ambiguity & equal columns of $M$ & $K_M=\{g:(1+g)c=0,\ \forall c\in\calC_M\}$ & single-fault degeneracy \\
Ambiguous data preimage & solve $He=e_1+e_g$ & preimage of $1+g\in\calS$ & ambiguity cost $\wamb$ \\
\bottomrule
\end{tabular}}
\end{table*}

\section{Logical Decomposition and Distance Obstructions}
\label{sec:logical}

Before using the metachecks, we need to know that the BB code itself does not
hide a shorter logical operator.  For the selected $X$-error sector, write a
physical error as $(u,v)\in\calR^2$ and define
\begin{align}
 \mathcal K_X&:=\{(u,v):b^*u+a^*v=0\},\notag\\
 \mathcal S_X&:=\{(a^*r,b^*r):r\in\calR\}.
 \label{eq:logical_spaces}
\end{align}
Thus
$\mathcal L_X:=\mathcal K_X/\mathcal S_X\cong
\ker H/\im(H_X^\top)$ is the selected logical quotient.  For
$e\in\mathcal K_X$, we write $[e]:=e+\mathcal S_X$ for its logical class.
An element of
$\Ann(b^*)$ is a left-block error invisible to the selected syndrome.  The
colon ideal $(b^*:a^*)$ has a complementary interpretation: it consists of
right-block words $v$ whose contribution $a^*v$ can be cancelled by some
left-block word because $a^*v\in(b^*)$.

\begin{proposition}[Annihilator--colon logical decomposition]
\label{prop:logical_decomp}
The short exact sequence below is the selected $X$-sector form of the
decomposition in~\cite[Thm.~1]{RowshanDevitt2026}, and the equal-dimension
statement follows from~\cite[Lem.~1]{RowshanDevitt2026}:
\begin{align}
 \mathcal L_{\rm ann}&:=\frac{\Ann(b^*)}{a^*\Ann(b^*)},
 &
 \mathcal L_{\rm col}&:=\frac{(b^*:a^*)}{(b^*)}.
 \label{eq:logical_components}
\end{align}
The natural inclusion of the annihilator subspace is
$\iota(t+a^*\Ann(b^*))=[(t,0)]$, and the projection
\begin{equation}
 \pi_X[(u,v)]:=v+(b^*)
 \label{eq:logical_projection}
\end{equation}
induces
\begin{equation}
 0\longrightarrow \mathcal L_{\rm ann}
 \xrightarrow{\ \iota\ } \mathcal L_X
 \xrightarrow{\ \pi_X\ } \mathcal L_{\rm col}
 \longrightarrow0.
 \label{eq:logical_exact}
\end{equation}
Moreover,
\begin{equation}
 \dim\mathcal L_{\rm ann}=\dim\mathcal L_{\rm col}=\frac{k}{2}.
 \label{eq:logical_half}
\end{equation}
\end{proposition}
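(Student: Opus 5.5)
The plan is to verify the exact sequence directly in the group ring and then obtain the dimension equality from a counting argument using the antipode symmetry of Section~\ref{sec:css_sector}.

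\textbf{Well-definedness.} First check that $\iota$ and $\pi_X$ are well defined.
\begin{itemize}
\item If $t\in\Ann(b^*)$, then $(t,0)\in\mathcal K_X$.
\item If $t=a^*s$ with $s\in\Ann(b^*)$, then $(t,0)=(a^*s,b^*s)\in\mathcal S_X$, since $b^*s=0$.
\item For $(u,v)\in\mathcal K_X$ we have $a^*v=b^*u\in(b^*)$, so $v\in(b^*:a^*)$.
\item Adding a stabilizer $(a^*r,b^*r)$ changes $v$ by $b^*r\in(b^*)$.
\end{itemize}

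\textbf{Exactness.}
\begin{enumerate}
\item \emph{Surjectivity of $\pi_X$.} Given $v\in(b^*:a^*)$, write $a^*v=b^*u$. Then $(u,v)\in\mathcal K_X$ maps to $v+(b^*)$.
\item \emph{$\im\iota\subseteq\ker\pi_X$.} This is clear, since $\iota$ lands on classes with zero right block.
\item \emph{$\ker\pi_X\subseteq\im\iota$.} If $v=b^*r$, subtract the stabilizer $(a^*r,b^*r)$ to get the representative $(u-a^*r,0)$. Its left block lies in $\Ann(b^*)$.
\item \emph{Injectivity of $\iota$.} If $(t,0)=(a^*r,b^*r)$, then $b^*r=0$, so $r\in\Ann(b^*)$ and $t\in a^*\Ann(b^*)$.
\end{enumerate}
Exactness gives $\dim\mathcal L_{\rm ann}+\dim\mathcal L_{\rm col}=k$.

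\textbf{Dimension equality.} It suffices to show $\dim\mathcal L_{\rm col}=k/2$. The obstacle is avoiding any structure theory for $\calR$, which need not be semisimple in repeated-root cases, so the argument uses only ranks. Since $(b^*)\subseteq(b^*:a^*)$,
\[
\dim\mathcal L_{\rm col}=\dim(b^*:a^*)-\dim(b^*).
\]
Because $(b^*:a^*)$ is the preimage of $(b^*)$ under multiplication by $a^*$,
\[
\dim(b^*:a^*)=\dim\Ann(a^*)+\dim\bigl((a^*)\cap(b^*)\bigr).
\]
Using $\dim\bigl((a^*)\cap(b^*)\bigr)=\dim(a^*)+\dim(b^*)-\dim(a^*,b^*)$ and $\dim\Ann(a^*)+\dim(a^*)=N$,
\[
\dim\mathcal L_{\rm col}=N-\dim(a^*,b^*)=N-\rk H=r_M=k/2,
\]
where the last two equalities are \eqref{eq:meta_syndrome_ideals} and \eqref{eq:rm}. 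Exactness then forces $\dim\mathcal L_{\rm ann}=k/2$ as well.

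As a cross-check, a direct computation gives $\dim\mathcal L_{\rm ann}=\dim\Ann(b^*)-\dim a^*\Ann(b^*)=\dim(\Ann(a^*)\cap\Ann(b^*))$. This equals $\dim\calC_M$ after applying the involution $*$, which preserves dimensions, so it is again $k/2$.

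The only delicate bookkeeping is $\dim(b^*:a^*)$. Everything else is formal, and the statement is consistent with citing \cite[Thm.~1, Lem.~1]{RowshanDevitt2026}.
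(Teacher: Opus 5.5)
Your proof is correct, and it takes a genuinely different route from the paper, which contains no in-text argument. Appendix~\ref{app:proofs} imports the exact sequence and the $k/2$ balance from \cite[Thm.~1, Lem.~1]{RowshanDevitt2026}. The surrounding text relates the selected $X$ sector to that reference through the weight-preserving map $(u,v)\mapsto(v^*,u^*)$, with $a$ and $b$ interchanged.

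You instead verify everything directly in the selected sector. Well-definedness and the four exactness checks use only commutativity of $\calR$ and the definitions of $\mathcal K_X$ and $\mathcal S_X$. The dimension count uses only the preimage formula, the sum--intersection identity, and rank--nullity. It therefore holds for every two-block BB code without coprimality or semisimplicity assumptions, which is precisely the scope the paper claims.

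Your computation also exposes a link the citation hides: $\dim\mathcal L_{\rm col}=N-\dim(a^*,b^*)=\dim\calA$ and $\dim\mathcal L_{\rm ann}=\dim\bigl(\Ann(a^*)\cap\Ann(b^*)\bigr)=\dim\calC_M$. Combined with exactness, these two independent counts even recover $\rk H_X=\rk H$, and hence $k=2r_M$, without the antipode argument.

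One step is tacit. The claim that ``exactness gives the sum $k$'' uses $\dim\mathcal L_X=k$, i.e.\ $\mathcal K_X=\ker H$, $\mathcal S_X=\im H_X^\top$ and $\rk H_X=\rk H$. Your cross-check makes it unnecessary.

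The paper's citation buys brevity and consistency with the companion work where $d_{\rm ann}$ and $d_{\rm col}$ were certified. Your derivation buys a self-contained statement in this paper's own labeling. That is worth having, since the paper itself notes that the sector correspondence preserves weight but not necessarily component labels.
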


For later comparison, define the two component minima directly in the selected
sector:
\begin{align}
 d_{\rm ann}&:=\min\{\wt(e):e\in\mathcal K_X,\ [e]\ne0,\ \pi_X([e])=0\},\notag\\
 d_{\rm col}&:=\min\{\wt(e):e\in\mathcal K_X,\ \pi_X([e])\ne0\}.
 \label{eq:component_distances}
\end{align}
Every nonzero logical class has either $\pi_X([e])=0$ or
$\pi_X([e])\ne0$. Therefore, directly from these definitions,
\begin{equation}
 d_X=\min(d_{\rm ann},d_{\rm col}).
 \label{eq:component_distance_identity}
\end{equation}
The substantive result imported from~\cite[Thm.~4, Table~VII]{RowshanDevitt2026} is the computation and certification of the component minima for the standard codes,
not the elementary minimum identity itself.

The zeros at the two ends of~\eqref{eq:logical_exact} denote the trivial
vector space containing only the zero class.  In a short exact sequence,
``exact'' means that the image of each arrow is the kernel of the next.  Here
the first zero therefore says that $\iota$ is injective; the middle equality
$\im\iota=\ker\pi_X$ identifies the annihilator component inside the full
logical space; and the final zero says that $\pi_X$ is onto, so every colon
class is represented by at least one logical class.  The sequence labels every
logical class by its colon coordinate, but it does not choose a canonical
direct-sum splitting of $\mathcal L_X$.

For the same running example, $k=8$. Proposition~\ref{prop:logical_decomp}
therefore gives four logical dimensions in the annihilator kernel and four in
the colon quotient. At the level of dimensions, the sequence reads
\[
0\longrightarrow\F_2^4\longrightarrow\F_2^8
\longrightarrow\F_2^4\longrightarrow0.
\]
This statement classifies the eight-dimensional logical space; it does not by
itself determine the minimum weight in either component.  The toy code also
shows why the logical check cannot be skipped.  The weight-two element
$t=1+y$ satisfies $bt=0$, while row reduction shows
$t\notin a\,\Ann(b)$.  Hence $(t,0)$ represents a nontrivial annihilator-class
logical of weight two.  The running example is therefore useful for notation
and syndrome redundancy, but it is deliberately not a good-distance code.

\begin{figure}[t]
\centering
\begin{tikzpicture}[>=Stealth,font=\scriptsize,scale=0.70,transform shape,
 box/.style={draw,rounded corners=2pt,align=center,minimum height=8mm,inner sep=3pt},
 arr/.style={->,thick}]
\node (z0) {$0$};
\node[box,fill=orange!10,text width=2.9cm,right=4mm of z0] (ann)
 {$\mathcal L_{\rm ann}$\\$\Ann(b^*)/a^*\Ann(b^*)$\\$\dim=k/2$};
\node[box,fill=blue!7,text width=2.25cm,right=5mm of ann] (full)
 {$\mathcal L_X$\\all selected\\logical classes};
\node[box,fill=green!10,text width=2.9cm,right=5mm of full] (col)
 {$\mathcal L_{\rm col}$\\$(b^*:a^*)/(b^*)$\\$\dim=k/2$};
\node[right=4mm of col] (z1) {$0$};
\draw[arr] (z0)--(ann);
\draw[arr] (ann)--node[above] {$\iota$} (full);
\draw[arr] (full)--node[above] {$\pi_X$} (col);
\draw[arr] (col)--(z1);
\node[below=4mm of full,draw,rounded corners=2pt,fill=gray!8,inner sep=3pt]
 {$d_X=\min(d_{\rm ann},d_{\rm col})$};
\end{tikzpicture}
\caption{Reading the short exact sequence.  The endpoint zeros mean that the
first map is injective and the last is surjective; exactness in the middle gives
$\im\iota=\ker\pi_X$.  The distance is the smaller of the two component minima
by definition.}
\label{fig:logical_exact}
\end{figure}

A component label is different from the support shape of a representative.
Every class in $\mathcal L_{\rm ann}$ admits a left-block representative, but
adding a stabilizer can lower its weight while populating the other block.
Thus a left-block witness is an upper bound on $d_{\rm ann}$, not necessarily
the minimum of that component.  Conversely, a colon class can occasionally
have a one-sided minimum.  This distinction is useful here because it replaces
the weaker rule ``search one block, then search both blocks'' by an exact
partition of the logical problem.

\begin{example}[$\qcode{108}{8}{10}$]
For $l=9$, $m=6$, $a=x^3+y+y^2$, and $b=y^3+x+x^2$, the component
distances certified in~\cite[Ex.~6]{RowshanDevitt2026} are
\begin{equation}
 (d_{\rm ann},d_{\rm col})=(12,10).
 \label{eq:108_components}
\end{equation}
Hence the minimum-distance logical is a colon-component class.  The older
one-sided search sees a weight-12 witness, but the exact decomposition explains
why the true distance is ten rather than merely reporting that a lighter
representative happens to occupy both blocks.
\end{example}

Under the weight-preserving map $(u,v)\mapsto(v^*,u^*)$ of~\cite[Prop.~3]{RowshanDevitt2026}, the selected
$X$-sector components correspond to the components of~\cite[Cor.~2]{RowshanDevitt2026}
with the roles of $a$ and $b$ interchanged.  Weight is preserved, but the
component labels need not be.  The values of $d_{\rm ann}$ and $d_{\rm col}$
in Table~\ref{tab:standard_codes} were evaluated in the labeling used in this
manuscript.

A particularly simple warning occurs when the two generators are equal.
If $a=b\ne0$ and the code encodes any logical qubit, then
$H_Z=[L_a^\top\mid L_a^\top]$.  For any coordinate $i$, the two single-qubit
errors $(e_i,0)$ and $(0,e_i)$ therefore have the same syndrome, while their
difference $(e_i,e_i)$ is an undetectable weight-two operator.  It is
nontrivial: if $(e_i,e_i)=H_X^\top r$, then $e_i\in\im L_a^\top$, but for
$k>0$ this image is a proper ideal and hence cannot contain the monomial unit
$e_i$.  No weight-one error is undetectable because $a\ne0$.  Thus every
encoded symmetric-generator BB code has $d_X=d_Z=2$.  Under independent $X$ errors of
probability $p$, the same pairing gives the decoder-independent lower bound
\begin{equation}
 P_{\rm L}\ge \frac{n}{2}p(1-p)^{n-1},
 \label{eq:symmetric_floor}
\end{equation}
because a syndrome-based decoder can correct at most one error from each of the
$N$ indistinguishable pairs.  We use this only as a fast rejection rule; the
general logical guardrail remains Proposition~\ref{prop:logical_decomp}.

\begin{figure}[t]
\centering
\begin{tikzpicture}[>=Stealth,font=\scriptsize,
 box/.style={draw,rounded corners=2pt,align=center,minimum height=7mm,inner sep=3pt}]
\node[box,fill=blue!7,text width=2.3cm] (l) {left-block error\\$(e_i,0)$};
\node[box,fill=blue!7,text width=2.3cm,right=17mm of l] (r) {right-block error\\$(0,e_i)$};
\node[box,fill=orange!9,text width=2.6cm,below=9mm of $(l)!0.5!(r)$] (s)
 {same measured syndrome\\when $a=b$};
\node[box,fill=red!7,text width=2.8cm,below=7mm of s] (d)
 {difference $(e_i,e_i)$\\weight-two logical};
\draw[->,thick] (l)--(s); \draw[->,thick] (r)--(s); \draw[->,thick] (s)--(d);
\end{tikzpicture}
\caption{Symmetric-generator obstruction.  Equal generators make the two
single-qubit errors at the same coordinate syndrome-indistinguishable; their
difference is a weight-two logical.}
\label{fig:symmetric_obstruction}
\end{figure}

\section{Distance-Two Syndrome Codes}
\label{sec:ds2}

The logical calculation above answers whether the quantum code itself contains
a short logical.  We now ask a different question: how much information about
a faulty measured syndrome survives after it is compressed to the
metasyndrome?  The valid syndromes form a classical binary code.  Following the
terminology in~\cite{Lin2025}, we call its minimum Hamming weight the
\emph{syndrome distance},
\begin{equation}
 \ds:=\min_{0\ne s\in\calS}\wt(s).
 \label{eq:ds}
\end{equation}
A larger $\ds$ means that more measurement-bit flips are required before two
candidate repairs can differ by a valid syndrome word.  For an encoded BB code
($k>0$), $\ds=1$ cannot occur: a weight-one syndrome word is a monomial unit,
which would force the syndrome ideal to be all of $\calR$ and eliminate the
metacheck redundancy.  The first nontrivial failure mode is therefore
$\ds=2$.

Index syndrome coordinates by $G$ and define
\begin{equation}
 K_M:=\{g\in G:(1+g)c=0\ \text{for every }c\in\calC_M\}.
 \label{eq:KM}
\end{equation}
Thus $K_M$ consists of translations that leave every metacheck unchanged.  In
matrix language, a nontrivial $g\in K_M$ means that syndrome coordinates
related by $g$ give identical columns of $M$ and hence identical
single-fault metasyndromes.  Operationally, the $h$th column of $M$ is the
complete label available to a metasyndrome-only repair stage for a single
fault at $h$; equal columns are therefore information-theoretic collisions,
not decoder failures.

For the running BB$(3,3)$ example, no nonidentity translation leaves all four rows
of~\eqref{eq:running_M} unchanged, so $K_M=\{1\}$. Equivalently, the nine
columns of $M$ are all distinct. The valid syndrome $a=1+x+x^2$ has weight
three, and no valid syndrome has weight one or two, so this example has
$\ds=3$. The contrast with the next example is immediate: a nontrivial
translation subgroup creates collisions between single measurement faults.

\begin{example}[Orbit ambiguity in BB$(6,3)$]
For BB$(6,3)$, $N=18$ and $K_M=\langle x^3\rangle$ has size two.  The 18
syndrome coordinates therefore form nine pairs $\{h,x^3h\}$.  A single fault
at either position in a pair produces the same metasyndrome, so a repair rule
that sees only the metasyndrome cannot decide which member was faulty.  This is
the simplest example of the orbit structure formalized next.
\end{example}

\begin{theorem}[Distance-two characterization and orbit factorization]
\label{thm:ds2}
Assume $k>0$ and let $\kappa=|K_M|$.  Then:
\begin{enumerate}[label=(\roman*),leftmargin=5mm]
\item $K_M$ is a subgroup of $G$, and
\begin{equation}
 \ds=2 \quad\Longleftrightarrow\quad \kappa>1.
 \label{eq:ds2iff}
\end{equation}
\item For a full-rank metacheck matrix $M$, two single measurement faults at
positions $h$ and $gh$ have the same metasyndrome for every $h\in G$
precisely when $g\in K_M$.  Thus the $N$ syndrome coordinates split into
$N/\kappa$ translation orbits of size $\kappa$.
\item Let $\Pi_{K_M}:\F_2^N\to\F_2^{N/\kappa}$ map a measurement-error vector
to the binary parity of its entries on each $K_M$-orbit.  After ordering
coordinates by orbits, there is a binary matrix $\overline M$ such that
\begin{equation}
 M=\overline M\Pi_{K_M}.
 \label{eq:orbit_factorization}
\end{equation}
Consequently, the metasyndrome retains only the parity of the faults in each
orbit, not their locations inside that orbit.
\end{enumerate}
\end{theorem}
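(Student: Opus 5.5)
The plan is to work entirely in the group-ring picture of \eqref{eq:meta_syndrome_ideals} and use two facts. First, $\calC_M=\Ann(a)\cap\Ann(b)$ and $\calS=(a^*,b^*)$ are ideals, so both are invariant under every translation $g\in G$. Second, $\calS=\ker M$ where $M$ has row space $\calC_M$, so $\calS=\calC_M^{\perp}$ under the standard coordinate inner product. Write $c_h$ for the coefficient of the monomial $h$ in $c\in\calR$. Then $(gc)_{gh}=c_h$, and $(1+g)c=0$ is equivalent to $c_h=c_{g^{-1}h}$ for all $h\in G$.

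For the subgroup claim in (i), closure follows from
\[
(1+gh)c=(1+g)c+g(1+h)c .
\]
Inverses follow from $(1+g^{-1})c=g^{-1}(1+g)c$, and $1\in K_M$ trivially.

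For the equivalence, I first reduce $\ds=2$ to a statement about a single word. Any weight-two word $h+h'$ equals $h(1+h^{-1}h')$. Since $\calS$ is an ideal, $\ds\le2$ holds iff $1+g\in\calS$ for some $g\neq1$. The paper has already shown that $\ds=1$ is impossible when $k>0$, so this is equivalent to $\ds=2$.

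By duality, $1+g\in\calS$ iff $c'_1=c'_g$ for every $c'\in\calC_M$. Apply this to the translates $c'=h^{-1}c$, which lie in $\calC_M$ by translation invariance. This turns the single-coordinate condition into $c_h=c_{hg}$ for all $h$ and all $c\in\calC_M$, that is, $(1+g^{-1})c=0$ for all $c$. Hence $1+g\in\calS$ iff $g^{-1}\in K_M$, iff $g\in K_M$ since $K_M$ is a subgroup. This gives $\ds=2\iff\kappa>1$.

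For (ii), the column of $M$ indexed by $h$ is the vector of values $(r_i)_h$ over the rows $r_i$ of $M$. Because the rows span $\calC_M$, columns $h$ and $gh$ agree iff $c_h=c_{gh}$ for every $c\in\calC_M$. Requiring this for every $h$ is exactly $(1+g^{-1})c=0$ for all $c$, i.e.\ $g^{-1}\in K_M$, i.e.\ $g\in K_M$. The single-fault metasyndrome at $h$ is that column, so this is the claimed criterion. (Translation invariance in fact shows that equality at one $h$ already forces it at all $h$.) The orbits are the cosets $K_Mh$, so they partition $G$ into $N/\kappa$ classes of size $\kappa$. This step does not depend on the choice of full-rank $M$, since any two choices differ by an invertible row operation.

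For (iii), by (ii) the columns of $M$ are constant on each coset $O$. Let $\overline M$ be the matrix whose column for $O$ is that common column. Then
\[
M\xi=\sum_h \xi_h M_{\cdot h}=\sum_O \overline M_{\cdot O}\sum_{h\in O}\xi_h=\overline M\,\Pi_{K_M}\xi,
\]
which is \eqref{eq:orbit_factorization}. The consequence about retaining only orbit parities is then immediate.

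The main obstacle is the bookkeeping in (i). One must pass correctly from the pointwise orthogonality condition $c'_1=c'_g$ to the global annihilation condition, and keep track of $g$ versus $g^{-1}$. The translation-invariance argument combined with the subgroup property is the step to get right; everything else is routine.
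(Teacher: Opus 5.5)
Your proof is correct and follows essentially the same route as the paper's: you get the subgroup property from $gc=c$, reduce weight-two syndrome words to $1+g$, and use the duality $\calS=\calC_M^\perp$ with translation invariance of $\calC_M$ to turn $c_1=c_g$ into $(1+g)c=0$; then (ii) is equality of columns of $M$ and (iii) collapses the equal columns. Your coefficient bookkeeping just makes explicit the $g$ versus $g^{-1}$ step that the paper handles with ``up to replacing $g$ by $g^{-1}$'' and closure under inverses.
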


\begin{figure}[t]
\centering
\resizebox{0.98\columnwidth}{!}{%
\begin{tikzpicture}[>=Stealth,font=\scriptsize,
 dot/.style={circle,draw,minimum size=5mm,inner sep=0pt},
 box/.style={draw,rounded corners=2pt,align=center,inner sep=3pt},
 arr/.style={->,thick}]
\node[dot,fill=blue!9] (a1) at (0,0) {$h_1$};
\node[dot,fill=blue!9] (a2) at (1.2,0) {$gh_1$};
\node[dot,fill=green!9] (b1) at (0,-0.9) {$h_2$};
\node[dot,fill=green!9] (b2) at (1.2,-0.9) {$gh_2$};
\draw[decorate,decoration={brace,amplitude=3pt}] (-0.35,0.35)--(1.55,0.35)
 node[midway,above=3pt] {$K_M$-orbit};
\draw[decorate,decoration={brace,mirror,amplitude=3pt}] (-0.35,-1.25)--(1.55,-1.25);
\node[box,fill=orange!8,text width=2.2cm] (par) at (3.1,-0.45)
 {orbit parity\\$\Pi_{K_M}\xi$};
\node[box,fill=red!6,text width=2.0cm] (meta) at (5.8,-0.45)
 {metasyndrome\\$M\xi$};
\draw[arr] (a2.east)--(par.west);
\draw[arr] (b2.east)--(par.west);
\draw[arr] (par)--node[above] {$\overline M$} (meta);
\node[align=center,font=\tiny] at (0.6,-1.7)
 {moving one fault within an orbit\\does not change its metasyndrome};
\end{tikzpicture}}
\caption{Orbit factorization in Theorem~\ref{thm:ds2}, illustrated for
$|K_M|=2$.  The two locations in an orbit are distinguishable by the raw
syndrome coordinates but not after compression to the orbit parity and then to
$M\xi$.}
\label{fig:orbit_factorization}
\end{figure}

The proof is in Appendix~\ref{app:proofs}.  Figure~\ref{fig:orbit_factorization}
makes the information loss explicit.  When $|K_M|=2$, moving a single fault
between the two positions of a pair leaves the metasyndrome unchanged, while
two faults in the same pair cancel at the orbit-parity level.

Equivalently, every metacheck is constant on each $K_M$-orbit and may be
viewed as a parity check on the smaller quotient torus $G/K_M$.  This geometric
picture is useful when the subgroup is generated by a visible lattice period,
such as $x^6$ in the Gross code.

\begin{corollary}[Finite-rate obstruction for metasyndrome-only repair]
\label{cor:ambiguity}
Assume $\ds=2$, let $\kappa=|K_M|>1$, and define
\begin{equation}
 u_1:=N\left(1-\frac{1}{\kappa}\right).
 \label{eq:u1}
\end{equation}
Conditioned on exactly one uniformly located measurement fault, every rule
whose estimate $\widehat\xi$ is a function of $M\widetilde s$ alone fails on
at least a fraction $1-1/\kappa$; minimum-weight lookup attains this
single-fault bound.  Under independent measurement flips with
$0\le p_M\le1/2$, every such metasyndrome-only exact-reconstruction rule obeys
\begin{equation}
 P_{\rm repair}(p_M)\ge 1-(1-p_M)^{u_1}
 =u_1p_M+O(p_M^2).
 \label{eq:finite_repair_bound}
\end{equation}
For $\ds\ge3$ we set $u_1=0$.
\end{corollary}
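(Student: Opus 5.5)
The plan is to use Theorem~\ref{thm:ds2}(ii)--(iii): the single-fault metasyndrome at $h$ is the column $Me_h=\overline M\,\Pi_{K_M}e_h$, which depends only on the $K_M$-orbit of $h$. Before using this, we must rule out collisions between different orbits, so that the orbit labels are exactly the information available. Suppose $Me_h=Me_{h'}$. Then $e_h+e_{h'}\in\ker M=\calS$ is a weight-two valid syndrome, $1+h^{-1}h'$ up to translation. Since $\calS$ is an ideal, $1+g\in\calS$ with $g=h^{-1}h'$. Every metacheck $c\in\calC_M=\ker H^\top$ is orthogonal to $\calS$ and to all its translates, and translation invariance then gives $(1+g)c=0$. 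Thus $g\in K_M$. Hence single-fault metasyndromes are constant on orbits and distinct across orbits.

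For the single-fault statement, condition on one fault at a uniform location $h$. Any rule $\widehat\xi=f(M\widetilde s)$ outputs a single vector for each orbit's common metasyndrome. That vector can equal $e_h$ for at most one of the $\kappa$ locations in the orbit. Summing over the $N/\kappa$ orbits, at most $N/\kappa$ locations are repaired exactly, so the failure fraction is at least $1-1/\kappa$. Equivalently, at least $u_1$ locations fail. Minimum-weight lookup attains this bound. Each single-fault metasyndrome has a weight-one minimal representative, because weight zero is impossible: $Me_h\neq0$ follows from $\ds\ge2$. The lookup returns one fixed weight-one representative per orbit, which is correct on exactly one location per orbit.

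For the i.i.d.\ bound, lower-bound $P_{\rm repair}$ by the probability that the error vector $\xi$ lies in a "bad" set. Fix any rule $f$. For each orbit $O$, let $h_O$ be the location with $f(Me_{h_O})=e_{h_O}$, chosen arbitrarily if no such location exists. Let $U$ be the $u_1$ remaining locations. Now partition all error patterns $\xi$ according to their restriction to the complement of $U$. I would then argue that within each fiber the rule succeeds on at most a $(1-p_M)^{u_1}$ share of the fiber's probability mass; this is the step I expect to be the main obstacle.

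The cleanest route I would try first is an injection argument. Map $\xi$ to $\xi'=\xi+e_h+e_{h_O}$, where $h\in O\cap U$. This leaves $M\xi$ unchanged, because both locations lie in the same orbit $O$. With $p_M\le1/2$, one can pair up failures against successes with probability weights compared monotonically. Concretely, one shows that the set of $\xi$ on which $f$ succeeds, i.e.\ $\widehat\xi=\xi$, contains at most one element from each coset $\xi+\ker M$. Moreover, the most likely element of each coset restricted to the orbit structure is the one vanishing on $U$. Summing the per-coset maximum probability then yields $P(\text{success})\le P(\xi|_U=0)=(1-p_M)^{u_1}$. This is plausible because, coordinatewise within each orbit, moving mass off $U$ onto the representative $h_O$ never lowers probability when $p_M\le1/2$.

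Finally, expand $1-(1-p_M)^{u_1}=u_1p_M+O(p_M^2)$. For $\ds\ge3$ the convention $u_1=0$ makes the bound trivial, which is consistent with $\kappa=1$ by part (i) of Theorem~\ref{thm:ds2}.
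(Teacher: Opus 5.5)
Your proof is correct, and for the independent-flip bound it takes a different route from the paper. The paper gives the rule more information than it has: it reveals the full orbit-parity vector $\Pi_{K_M}\xi$, which is allowed because $M=\overline M\Pi_{K_M}$. It then uses independence of the orbit patterns to split the problem orbit by orbit. On each orbit the best even guess is all-zero and the best odd guess is any weight-one pattern, so the optimum there is $(1-p_M)^{\kappa}+p_M(1-p_M)^{\kappa-1}=(1-p_M)^{\kappa-1}$; multiplying over the $N/\kappa$ orbits gives $(1-p_M)^{u_1}$.

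You instead stay with the cosets of $\ker M=\calS$. Success is at most $\sum_C\max_{\xi\in C}P(\xi)$. Since $e_h+e_{h_O}$ is a translate of $1+g$ with $g\in K_M$, it lies in $\calS$. Moving a $1$ from $h\in U$ to $h_O$ therefore stays in the coset and never increases weight, so each coset has a maximizer vanishing on $U$. These maximizers are distinct vectors supported off $U$, so the sum is at most $P(\xi|_U=0)=(1-p_M)^{u_1}$.

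Both arguments rest on the same fact: $\ker M$ contains every even pattern inside an orbit. The paper's version is shorter and shows that $(1-p_M)^{u_1}$ is the exact optimum of the parity-informed relaxation. Yours avoids the information-monotonicity and independence steps and bounds the maximum-likelihood (minimum-weight) coset-leader rule directly. Your first paragraph shows that distinct orbits have distinct nonzero metasyndromes, using that $\calS$ is an ideal. It is not needed for either lower bound, but it is exactly what justifies the claim that minimum-weight lookup attains $1-1/\kappa$, which the paper's proof leaves implicit.

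A few clean-ups. First, drop the fiber-by-fiber claim in your third paragraph, because it is false for general rules. A rule may place its output for every metasyndrome inside one fixed fiber $\xi|_{U^c}=\eta_0$. That fiber meets several cosets, since $Me_h\ne0$ for $h\in U$, so for $p_M>0$ its conditional success exceeds $(1-p_M)^{u_1}$. Only the global coset bound holds, and it is all you need.

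Second, when $\calS$ is strictly larger than the span of the within-orbit pairs, a coset contains many vectors vanishing on $U$. State the step as ``some maximizer vanishes on $U$'' rather than ``the one''. Apply the move only at coordinates with $\xi_h=1$. Choosing $h_O$ through $f$ is unnecessary; any fixed representative per orbit works. With these changes the step you called ``plausible'' becomes a complete proof.
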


The scope of Corollary~\ref{cor:ambiguity} matters.  It applies after the raw
measured syndrome $\widetilde s$ has been compressed to $M\widetilde s$.  A
joint decoder keeps $\widetilde s$ and a data prior and is therefore not
subject to this information-loss bound.  Even if a metasyndrome-only rule is
given the complete orbit-parity vector, its optimal exact guess on one orbit
succeeds with probability $(1-p_M)^{\kappa-1}$; multiplying over the
$N/\kappa$ independent orbits gives~\eqref{eq:finite_repair_bound}.

The same orbit picture has a direct measurement interpretation.  Suppose some
checks are measured a second time before any new data fault occurs.  To remove
all single-fault ambiguities, it is necessary and sufficient to remeasure all
but one coordinate in every $K_M$-orbit.  Hence the smallest number of extra
measurements is
\begin{equation}
 |\mathcal T|_{\min}=N-\frac{N}{|K_M|}=u_1,
 \label{eq:partial_cost}
\end{equation}
where $\mathcal T$ denotes the remeasured checks.  The running BB$(3,3)$
example has $K_M=\{1\}$ and therefore needs no second measurement to resolve a
single fault.  For Gross, $K_M=\langle x^6\rangle$, so the 72 checks form 36
pairs: remeasuring one check
from each pair uses 36 additional measurements instead of repeating all 72.
To see the necessity and sufficiency in this partial-repetition claim, let
$\calS_{\mathcal T}=\{(s,s|_{\mathcal T}):s\in\calS\}$ denote the augmented
valid-syndrome code.  Its words have weight at least $\wt(s)$.  Weight-one
words of $\calS$ do not occur when $k>0$, and Theorem~\ref{thm:ds2} shows that
its weight-two words are exactly the translates $h(1+g)$ with
$g\in K_M\setminus\{1\}$.  Such a word remains of weight two, so the two single
faults at $h$ and $gh$ remain indistinguishable, exactly when both coordinates
lie outside $\mathcal T$.  Hence the augmented code has distance at least
three, and minimum-weight repair corrects every single reported-bit fault, if
and only if each $K_M$-orbit contains at most one coordinate outside
$\mathcal T$.  This gives~\eqref{eq:partial_cost}.  The argument assumes that
the data do not change between the two measurements; with intervening data
faults the relevant object is a space--time decoder.

\begin{figure}[t]
\centering
\resizebox{0.98\columnwidth}{!}{%
\begin{tikzpicture}[
    >=Stealth,
    font=\scriptsize,
    dot/.style={
        circle,
        draw,
        minimum size=6mm,
        inner sep=0pt
    },
    box/.style={
        draw,
        rounded corners=2pt,
        align=center,
        inner sep=3pt,
        minimum height=8mm
    },
    arr/.style={->,thick}
]

\node[dot,fill=blue!8] (a) at (0,0) {$h$};
\node[dot,fill=blue!8] (b) at (1.0,0) {$gh$};

\draw[
    decorate,
    decoration={brace,amplitude=3pt}
]
    (-0.30,0.48) -- (1.25,0.48)
    node[midway,above=3pt] {$K_M$-orbit};

\node[
    box,
    fill=green!9,
    text width=2.45cm
] (r) at (3.15,0)
    {remeasure one member\\of the pair};

\node[
    box,
    fill=orange!8,
    text width=2.35cm
] (u) at (6.15,0)
    {single-fault location\\becomes unique};

\draw[arr] (b.east) -- (r.west);
\draw[arr] (r.east) -- (u.west);

\end{tikzpicture}%
}
\caption{Partial repetition for a size-two orbit. One additional report breaks
the single-fault ambiguity. For an orbit of size $\kappa$, $\kappa-1$
coordinates must be remeasured, giving~\eqref{eq:partial_cost}.}
\label{fig:partial_repeat}
\end{figure}

\subsection{Syndrome quotient and unit-group bound}

The orbit theorem diagnoses collisions after a metacheck matrix is known.  A
more useful design question can be asked earlier: can the algebra even provide
enough distinct states to label all $N$ single-fault locations?  The quotient
below turns that question into a finite-group counting problem.

Set
\begin{equation}
 \calA:=\calR/\calS,\qquad \calS=(a^*,b^*),
 \label{eq:quotient_algebra}
\end{equation}
and write $\bar c=c+\calS$.  Two syndrome-space vectors have the same class in
$\calA$ exactly when their difference is a valid syndrome, so $\calA$ is the
algebraic version of the metasyndrome state space.  Since
$\dim\calA=N-\rk H=k/2$, it is small for the finite codes studied here.  We
write $\calA^\times$ for its \emph{unit group}, i.e., the invertible residue
classes of $\calA$.

For the running BB$(3,3)$ example, $\dim\calA=k/2=4$, so the quotient algebra has
16 elements, of which nine are units. Since $|G|=9$ and $K_M=\{1\}$, the nine
translations map bijectively to those nine units. This recovers, in algebraic
form, the earlier observation that all nine single measurement faults have
distinct metasyndromes.

\begin{proposition}[Unit-group bound]
\label{prop:unitgroup}
Assume $k>0$.  The metasyndrome $M\xi$ determines, and is determined by, the
class $\bar\xi\in\calA$.  Moreover,
\begin{equation}
 \phi:G\longrightarrow\calA^\times,\qquad h\longmapsto\bar h
 \label{eq:unit_map}
\end{equation}
is a group homomorphism with kernel $K_M$.  Hence the number of distinct
single-fault metasyndromes is
\begin{equation}
 |G/K_M|\le|\calA^\times|\le2^{k/2}-1,
 \label{eq:unit_bound}
\end{equation}
and
\begin{equation}
 u_1=N-|G/K_M|\ge N-|\calA^\times|.
 \label{eq:u1_unit_bound}
\end{equation}
In particular, $N>|\calA^\times|$ forces $\ds=2$, while $\ds\ge3$ exactly
when $G$ embeds in $\calA^\times$.
\end{proposition}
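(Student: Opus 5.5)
The argument rests on one duality fact. Since $M$ has full row rank with $\ker M=\calS$, the map $\xi\mapsto M\xi$ induces an injective linear map $\F_2^N/\calS\to\F_2^{r_M}$. Because $r_M=k/2=\dim\calA$ by~\eqref{eq:rm}, this map is a linear isomorphism $\calA\cong\F_2^{r_M}$. Hence $M\xi=M\xi'$ exactly when $\xi-\xi'\in\calS$, that is, exactly when $\bar\xi=\bar\xi'$. This gives the first claim: the metasyndrome and the class $\bar\xi$ determine each other.

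For the homomorphism, note that $\calS$ is an ideal by~\eqref{eq:meta_syndrome_ideals}, so $\calA$ is a commutative ring and reduction $\calR\to\calA$ is a ring map. Each monomial $h\in G$ is a unit of $\calR$ with inverse $h^{-1}$, so $\bar h\,\overline{h^{-1}}=\bar1$ and $\bar h\in\calA^\times$. Multiplicativity of reduction then makes $\phi$ a group homomorphism. Here $k>0$ ensures $\calS\ne\calR$, so $\bar1\ne0$ and $\calA$ is a nonzero ring.

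The kernel is $\{g:\bar g=\bar 1\}=\{g:1+g\in\calS\}$. I would identify this set with $K_M$ by dualizing. Under the standard inner product, $\calS=\ker M$ is the orthogonal complement of the row space $\calC_M$, so $1+g\in\calS$ iff $\langle 1+g,c\rangle=0$ for all $c\in\calC_M$. Since $\calC_M$ is translation invariant, this is equivalent to the stronger statement that $(1+g)c=0$ for all $c\in\calC_M$. The forward direction applies the scalar condition to all translates $hc$ and uses $\langle 1+g,hc\rangle=$ the coefficient at $h^{-1}$ (up to the involution $*$) of $(1+g^{*})c$. I expect the reindexing by $g\mapsto g^{-1}$ and $*$ to be the only fiddly bookkeeping. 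This is cleanest done by quoting Theorem~\ref{thm:ds2}(ii): single faults at $h$ and $gh$ give the same column of $M$ for all $h$ iff $g\in K_M$. By the first paragraph, that holds iff $\bar h=\overline{gh}$ for all $h$, which after multiplying by the unit $\bar h^{-1}$ is $\bar g=\bar1$. So $\ker\phi=K_M$.

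Next come the counting bounds. The single-fault metasyndromes correspond bijectively to the classes $\bar h$, that is, to $\phi(G)\cong G/K_M$, and $\phi(G)\subseteq\calA^\times$. Since $0$ is not a unit of the nonzero ring $\calA$ and $|\calA|=2^{k/2}$, we get $|\calA^\times|\le 2^{k/2}-1$, which is~\eqref{eq:unit_bound}. For~\eqref{eq:u1_unit_bound}, the definition $u_1=N(1-1/\kappa)=N-N/\kappa=N-|G/K_M|$ matches Corollary~\ref{cor:ambiguity}, and when $\kappa=1$ it gives $0$, consistent with the convention $u_1=0$ for $\ds\ge3$. The inequality then follows from~\eqref{eq:unit_bound}. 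Finally, $N>|\calA^\times|$ forces $|G/K_M|<N$, so $\kappa>1$ and $\ds=2$ by Theorem~\ref{thm:ds2}(i). Likewise, $\ds\ge3$ iff $\kappa=1$ (again using $\ds\ne1$ for $k>0$), iff $\phi$ is injective, iff $G$ embeds in $\calA^\times$. For the last equivalence, any embedding of $G$ in $\calA^\times$ should be read as $\phi$ itself, which is the natural one.
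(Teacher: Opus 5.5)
Your proposal is correct and follows essentially the same route as the paper's proof: $M$ induces the isomorphism $\calA\to\F_2^{r_M}$, and monomials reduce to units, so $\phi$ is a homomorphism. The kernel $\{g:1+g\in\calS\}$ is identified with $K_M$ through Theorem~\ref{thm:ds2}, and the counts follow from $|\calA|=2^{k/2}$ with $0\notin\calA^\times$. The differences are cosmetic: you reach $\ker\phi=K_M$ via part~(ii) of Theorem~\ref{thm:ds2} plus your first paragraph, whereas the paper cites Theorem~\ref{thm:ds2} directly for the weight-two membership $1+g\in\calS$. You also make explicit where $k>0$ and the $u_1=0$ convention enter.
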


This is an information limit for metasyndrome-only repair, rather than a
statement about joint decoders.  If the quotient algebra contains too few
distinct unit classes, some single-fault locations must collide before any
metasyndrome-only repair algorithm is chosen.

For the five standard codes in Table~\ref{tab:standard_codes}, exact
enumeration gives $|\calA^\times|=36,9,9,36,36$, respectively.  The map
\eqref{eq:unit_map} is surjective for all five rows, so their $K_M$ sizes also
follow from $|K_M|=N/|\calA^\times|$.  The same equality holds for the four
additional trinomial rows in Table~\ref{tab:diagnostic_codes}.  It fails for
all three symmetric rows: BB$(4,4)$, BB$(8,8)$, and BB$(6,6)$ have
$|G/K_M|=4,8,36$ but $|\calA^\times|=8,128,288$, respectively.  Thus
surjectivity is a feature of the listed trinomial examples, not a property of
every BB code.

\begin{figure}[t]
\centering
\resizebox{0.98\columnwidth}{!}{%
\begin{tikzpicture}[>=Stealth,font=\scriptsize,
 box/.style={draw,rounded corners=2pt,align=center,minimum height=8mm,inner sep=3pt}]
\node[box,fill=blue!7,text width=2.45cm] (loc) {$N$ possible\\single-fault locations};
\node[box,fill=orange!9,text width=2.45cm,right=5mm of loc] (orb)
 {$|G/K_M|$ distinct\\translation classes};
\node[box,fill=green!9,text width=2.55cm,right=5mm of orb] (unit)
 {at most $|\calA^\times|$\\distinguishable units};
\node[box,fill=gray!9,text width=2.45cm,right=5mm of unit] (cap)
 {at most $2^{k/2}-1$\\nonzero states};
\draw[->,thick] (loc)--node[above] {orbit collisions} (orb);
\draw[->,thick] (orb)--(unit);
\draw[->,thick] (unit)--(cap);
\end{tikzpicture}}
\caption{Capacity view of the unit-group bound.  Translation symmetry first
merges fault locations into $G/K_M$, and the quotient algebra limits how many
nonzero classes can remain.  The running BB$(3,3)$ example has $9\to9\to9$,
whereas Gross has $72\to36\to36$.}
\label{fig:unit_capacity}
\end{figure}

The coarse bound $|G/K_M|\le2^{k/2}-1$ can also be obtained by counting
distinct nonzero columns of a full-rank metacheck matrix.  The quotient algebra
adds the exact kernel $K_M$, the often sharper bound $|\calA^\times|$ (for
example, $9$ rather than $15$ when $k=8$ in the trinomial rows), and a test for
when that sharper bound is attained.  In the semisimple case, the standard
Chinese-remainder decomposition of the finite commutative algebra $\calA$
expresses $\calA^\times$ as a product of finite-field unit groups; related
spectral descriptions of BB codes are given in~\cite{Sabo2026}.  Repeated
roots give local rather than field factors.

\begin{corollary}[Bounded-dimension BB families]
\label{cor:family}
For any BB family with $k\le k_{\max}$ and $N\to\infty$,
\begin{equation}
 \frac{u_1}{N}\ge
 1-\frac{2^{k_{\max}/2}-1}{N}\longrightarrow1.
 \label{eq:family_u1}
\end{equation}
At every fixed $p_M\in(0,1/2]$, the exact-reconstruction failure probability
of any metasyndrome-only repair rule therefore tends to one.
\end{corollary}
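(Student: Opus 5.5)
The plan is to combine the unit-group bound of Proposition~\ref{prop:unitgroup} with the finite repair bound of Corollary~\ref{cor:ambiguity}.

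\emph{Step 1: the ratio bound.} For each member of the family with $k>0$, \eqref{eq:u1_unit_bound} and \eqref{eq:unit_bound} give
\[
 u_1=N-|G/K_M|\ge N-(2^{k/2}-1)\ge N-(2^{k_{\max}/2}-1).
\]
The last step uses $k\le k_{\max}$. Dividing by $N$ gives \eqref{eq:family_u1}, and the right-hand side tends to one because the numerator is a fixed constant while $N\to\infty$.

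\emph{Step 2: the degenerate case $k=0$.} Proposition~\ref{prop:unitgroup} assumes $k>0$, so members with $k=0$ need separate treatment. In that case there are no metachecks, $\calC_M=0$, and $K_M=G$ by \eqref{eq:KM}. The metasyndrome is then constant, so no rule can localize anything, and every location except one is confused. I would either exclude such codes as non-encoding or note that the bound $u_1\ge N-1$ holds trivially.

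\emph{Step 3: the probability statement.} Once $N>2^{k_{\max}/2}-1$, the unit bound forces $\ds=2$, so Corollary~\ref{cor:ambiguity} applies. For any metasyndrome-only rule,
\[
 P_{\rm repair}(p_M)\ge 1-(1-p_M)^{u_1}.
\]
Fix $p_M\in(0,1/2]$, so that $1-p_M<1$. Since $u_1\ge N-2^{k_{\max}/2}\to\infty$, we have $(1-p_M)^{u_1}\to0$, and therefore the failure probability tends to one.

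\emph{Main obstacle.} The only delicate point is that Corollary~\ref{cor:ambiguity} is stated under $\ds=2$ with $u_1$ defined by \eqref{eq:u1}. I need $u_1$ from \eqref{eq:u1_unit_bound} to agree with that definition, and the orbit count gives this: $N(1-1/\kappa)=N-|G/K_M|$ with $\kappa=|K_M|$.

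I also need the per-orbit argument behind \eqref{eq:finite_repair_bound} to remain valid for general $\kappa$. Each of the $N/\kappa$ orbits succeeds with probability at most $(1-p_M)^{\kappa-1}$ when $p_M\le1/2$. This holds because the most likely pattern within each parity class of an orbit is the all-zero pattern or a single flip, and the resulting product over orbits is $(1-p_M)^{u_1}$.

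No new algebra beyond these earlier results is required.
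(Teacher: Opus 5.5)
Your proposal is correct and follows the paper's own proof: chain \eqref{eq:u1_unit_bound} with \eqref{eq:unit_bound} and $k\le k_{\max}$ to obtain \eqref{eq:family_u1}, note that $N>2^{k_{\max}/2}-1$ forces $\ds=2$, and substitute into \eqref{eq:finite_repair_bound} so that $(1-p_M)^{u_1}\to0$. The $k=0$ aside and the re-derivation of the per-orbit bound are not needed, because the paper works under the standing assumption $k>0$ of Proposition~\ref{prop:unitgroup}; simply excluding non-encoding members is the right choice, since for $k=0$ members the displayed inequality would in any case require $k_{\max}\ge2$.
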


For finite codes, minimum-weight repair has the usual bounded-distance
property: because $\ker M=\calS$, every measurement error of weight
$<\ds/2$ is recovered exactly.  The complete finite-length leader histograms
and exact repair curves are reported with the numerical results in
Section~\ref{sec:num_results}.

\section{Ambiguity Versus Soundness}
\label{sec:soundness}

Syndrome ambiguity alone does not say how damaging a wrong repair is.  The
same ambiguous residual syndrome may require only a light data error in one
code and a much heavier one in another.  This section therefore attaches a
data-weight cost to the first ambiguity identified in Section~\ref{sec:ds2}.
It is a finite-code quantity tailored to the leading ambiguity, not a full soundness or confinement theorem.

For the running Example~\ref{ex:running}, $K_M=\{1\}$ and $\ds=3$, so
there is no weight-two orbit ambiguity and the severity metrics introduced in
this section are not needed. They become relevant precisely when distinct
single-fault locations collide, as in BB$(6,3)$ and Gross below.

For $h\in G$, let $e_h\in\F_2^N$ denote the unit vector at syndrome coordinate
$h$. When $\ds=2$, the syndrome and data questions meet at the orbit pair
$e_h+e_{gh}$. Define
\begin{equation}
\begin{aligned}
\wamb:=\min\{\wt(e):\;&He=e_h+e_{gh}\ \text{for some }h\in G,\\
&g\in K_M\setminus\{1\}\}.
\end{aligned}
\label{eq:wamb}
\end{equation}
The quantity is the lightest data preimage of the specific syndrome ambiguity
responsible for the linear exact-repair term.  It should not be confused with
the full soundness notion studied in~\cite{Campbell2019}.

\begin{example}[Equal ambiguity, different data cost]
BB$(6,3)$ and the Gross $\qcode{144}{12}{12}$ code both have
$|K_M|=2$, so an optimal metasyndrome-only exact-reconstruction rule
necessarily misidentifies half of the single measurement faults.  Their data consequences are very
different.  The lightest data error producing the ambiguous residual syndrome
has weight two for BB$(6,3)$ but weight six for Gross; the corresponding
average minimum costs are also two and six.  Thus the same 50\% single-fault
ambiguity can present the subsequent data decoder with a weight-two or a
weight-six problem.  This is why the collision rate and its data cost are reported separately.
\end{example}

\begin{figure}[t]
\centering
\resizebox{0.98\columnwidth}{!}{%
\begin{tikzpicture}[>=Stealth,font=\scriptsize,
 box/.style={draw,rounded corners=2pt,align=center,minimum height=0.66cm,inner sep=3pt}]
\node[box,fill=blue!7,text width=2.25cm] (a) {$g\in K_M\setminus\{1\}$\\$1+g\in\calS$};
\node[box,fill=orange!8,text width=2.55cm,right=5mm of a] (b)
 {faults at $h$ and $gh$\\same metasyndrome};
\node[box,fill=orange!6,text width=2.45cm,right=5mm of b] (c)
 {wrong repair leaves\\$h(1+g)$};
\node[box,fill=green!8,text width=2.55cm,right=5mm of c] (d)
 {$\mu(g)$: lightest data preimage\\$\wamb$, $\bar\mu_1$: summaries};
\draw[->,thick](a)--(b);\draw[->,thick](b)--(c);\draw[->,thick](c)--(d);
\end{tikzpicture}}
\caption{From a translation collision to its data consequence.  A nontrivial
$g\in K_M$ produces a weight-two valid syndrome $1+g$; $\mu(g)$ is the
lightest data error with that syndrome, while $\wamb$ and $\bar\mu_1$
summarize the minimum and average costs over the nontrivial orbit displacements.}
\label{fig:soundness_chain}
\end{figure}

Translation symmetry makes the search much smaller than the definition of
$\wamb$ suggests.  For each nontrivial $g\in K_M$, define
\begin{equation}
 \mu(g):=\min\{\wt(e):He=e_1+e_g\}.
 \label{eq:mu_g}
\end{equation}
If $He=e_1+e_g$, translating $e$ by any $h\in G$ preserves its weight and
changes the syndrome to $e_h+e_{hg}$.  Hence the reference position is
irrelevant and
\begin{equation}
 \wamb=\min_{g\in K_M\setminus\{1\}}\mu(g).
 \label{eq:wamb_mu}
\end{equation}
Only one target syndrome per nontrivial orbit displacement must therefore be
searched.

The minimum $\wamb$ records the easiest data realization of an orbit
ambiguity, but the other nontrivial translations can be more costly.  To
capture that variation we also use
\begin{equation}
 \bar\mu_1:=\frac{1}{|K_M|-1}
 \sum_{g\in K_M\setminus\{1\}}\mu(g),
 \label{eq:mubar}
\end{equation}
which is the average minimum data weight over the nontrivial within-orbit
displacements.  Conditioned on an unavoidable single-fault
mis-reconstruction by an optimal representative-per-orbit rule, these
displacements occur once per orbit, so $\bar\mu_1$ measures the typical
minimum severity of the first-order ambiguity rather than only its best case.

When every column of $H$ has odd weight, as for every code with trinomial
generators considered here, the syndrome of $e$ has the parity of $\wt(e)$, so
$\mu(g)$ is even.  Table~\ref{tab:standard_codes} reports both the minimum
and the average.  Gross has $(\wamb,\bar\mu_1)=(6,6)$, whereas
BB$(6,3)$ has $(2,2)$.  For $\qcode{108}{8}{10}$ the minimum is four but the
average is $26/5=5.2$, showing that the lightest ambiguity alone can understate
the typical minimum data burden.  These quantities are finite-code
diagnostics, not a replacement for a confinement bound.

\section{Finite-Code Evaluation}
\label{sec:audit}

Table~\ref{tab:standard_codes} collects five standard BB codes.  For the
Gross-pattern rows,
\begin{equation}
 a=x^3+y+y^2,\qquad b=y^3+x+x^2.
 \label{eq:gross_pair}
\end{equation}
The logical component distances are the exact values certified in
\cite[Table~VII]{RowshanDevitt2026}; the syndrome-side columns are recomputed here from
binary row reduction, the quotient algebra, and the low-weight searches of
Section~\ref{sec:num_method}.  The standard quantum distances were reported in
\cite{Bravyi2024}; for $\qcode{288}{12}{18}$, the exact distance certification in
\cite[Thm.~4, Table~IV]{RowshanDevitt2026} matches the previously reported weight-18 upper
bound.

\begin{table*}[t]
\centering
\caption{Standard BB codes.  The logical component distances
$d_{\rm ann},d_{\rm col}$ are certified in~\cite[Table~VII]{RowshanDevitt2026}.
The quantity $u_1$ counts unavoidable single-fault mis-reconstructions by an
optimal metasyndrome-only exact-reconstruction rule, and $|\calA^\times|$ is
the unit-group size of $\calA=\calR/\calS$.}
\label{tab:standard_codes}
\scriptsize
\renewcommand{\arraystretch}{1.08}
\resizebox{\textwidth}{!}{%
\begin{tabular}{@{}lccp{2.9cm}p{2.9cm}ccccccccccc@{}}
\toprule
Code & $l$ & $m$ & $a$ & $b$ & $k$ & $d_{\rm ann}$ & $d_{\rm col}$ & $d$ & $\ds$ & $|K_M|$ & $|\calA^\times|$ & $u_1/N$ & $\wamb$ & $\bar\mu_1$ \\
\midrule
$\qcode{72}{12}{6}$ &6&6&$x^3+y+y^2$&$y^3+x+x^2$&12&6&6&6&3&1&36&$0/36$&--&--\\
$\qcode{90}{8}{10}$ &15&3&$x^9+y+y^2$&$1+x^2+x^7$&8&10&10&10&2&5&9&$36/45$&4&4\\
$\qcode{108}{8}{10}$ &9&6&$x^3+y+y^2$&$y^3+x+x^2$&8&12&10&10&2&6&9&$45/54$&4&$5.2$\\
Gross $\qcode{144}{12}{12}$ &12&6&$x^3+y+y^2$&$y^3+x+x^2$&12&12&12&12&2&2&36&$36/72$&6&6\\
$\qcode{288}{12}{18}$ &12&12&$x^3+y^2+y^7$&$y^3+x+x^2$&12&18&18&18&2&4&36&$108/144$&$\ge8$&$\ge8$\\
\bottomrule
\end{tabular}}
\end{table*}

The $\qcode{72}{12}{6}$ code is the clearest positive syndrome-repair example
among the listed standard codes: $\ds=3$, so minimum-weight repair corrects
every single measurement fault, and both logical components have distance six.
The $\qcode{108}{8}{10}$ row provides the complementary logical warning:
$d_{\rm col}=10<d_{\rm ann}=12$.  The minimum-distance logical is therefore
identified by its colon class, not merely by observing that one low-weight
representative occupies both blocks.

\begin{table*}[t]
\centering
\caption{Additional rows used to separate syndrome behavior.  ``Gross pair''
refers to~\eqref{eq:gross_pair}.  A dash means that no independent distance
certification is needed in the present paper.}
\label{tab:diagnostic_codes}
\scriptsize
\renewcommand{\arraystretch}{1.08}
\resizebox{\textwidth}{!}{%
\begin{tabular}{@{}lccp{3.2cm}p{3.2cm}cccccl@{}}
\toprule
Example & $l$ & $m$ & $a$ & $b$ & $k$ & $d$ & $\ds$ & $|K_M|$ & $u_1/N$ & role \\
\midrule
BB$(6,3)$ &6&3&Gross pair&Gross pair&8&4&2&2&$9/18$&compact positive row\\
BB$(9,3)$ &9&3&Gross pair&Gross pair&8&--&2&3&$18/27$&translation classes of size 3\\
BB$(3,3)$ &3&3&$1+x+x^2$&$1+y+y^2$&8&--&3&1&$0/9$&semisimple, non-coprime\\
BB$(6,6)$ separable &6&6&$1+x+x^2$&$1+y+y^2$&8&--&2&4&$27/36$&repeated-root comparison\\
BB$(4,4)$ sym. &4&4&$1+y$&$1+y$&8&2&2&4&$12/16$&symmetric logical obstruction\\
BB$(6,6)$ sym. &6&6&$1+x+y+xy^2$&same&20&2&4&1&$0/36$&repair-friendly syndrome, $d=2$\\
BB$(8,8)$ sym. &8&8&$1+y$&$1+y$&16&2&2&8&$56/64$&symmetric logical obstruction\\
\bottomrule
\end{tabular}}
\end{table*}

The additional rows separate non-coprime periods from repeated roots.
BB$(3,3)$ is non-coprime but semisimple because both periods are odd.  For
BB$(6,3)$, direct low-weight enumeration gives $d_X=4$; the weight-preserving
CSS-sector permutation in Section~\ref{sec:css_sector} gives $d_Z=4$,
justifying the label $\qcode{36}{8}{4}$ used below.  The separable BB$(6,6)$ row has repeated-root
factors because the characteristic divides both periods.

\begin{table*}[t]
\centering
\caption{Finite-code checks and ranking quantities.}
\label{tab:screening_rule}
\renewcommand{\arraystretch}{1.10}
\begin{tabular}{@{}p{3.0cm}p{6.0cm}p{6.6cm}@{}}
\toprule
Item & Requirement or preference & Reason \\
\midrule
CSS validity & $H_XH_Z^\top=0$ with sparse BB checks & valid qLDPC construction \\
Encoding & require $k>0$ & by~\eqref{eq:rm} this also gives $r_M=k/2$ metachecks \\
Logical structure & determine $d_{\rm ann}$ and $d_{\rm col}$, or certified bounds & $d_X=\min(d_{\rm ann},d_{\rm col})$ rules out hidden logical obstructions \\
Symmetric special case & if $a=b$, reject every encoded code as distance two & immediate low-cost logical check \\
Syndrome structure & prefer larger $\ds$ and trivial or small $K_M$ & reduces low-weight measurement ambiguity \\
Ambiguity severity & prefer larger $\wamb$ and $\bar\mu_1$, or a stronger confinement bound & distinguishes syndrome ambiguity from its data consequence \\
Metacheck representation & state basis or overcomplete graph and row profile & required for reproducible Tanner-graph decoding \\
Decoder benchmark & sustained separated, joint, and space--time comparisons & evaluates logical memory performance \\
\bottomrule
\end{tabular}
\end{table*}

\section{Numerical Method}
\label{sec:num_method}

The numerical work separates exact finite-code calculations from sampled
memory experiments.  Ranks, translation symmetries, quotient-algebra data, and
low-weight ambiguity searches are computed exactly; sustained logical-failure
rates are estimated by Monte Carlo.  The logical component distances in
Table~\ref{tab:standard_codes} are the certified values from
\cite[Table~VII]{RowshanDevitt2026}; they are used here as a precondition for
interpreting the metacheck results rather than recomputed by the decoder study.

\subsection{Exact finite-code calculations}

Wherever possible, each syndrome-side quantity is computed by two independent
routes and checked as an identity.  The subgroup $K_M$ is obtained from
\eqref{eq:KM} by testing every translation against a basis of $\calC_M$.
Independently, $\ds$ is found by low-weight column dependencies of $M$, $u_1$
is counted from equal-column classes, and $c_1$ is read from the leader
histogram.  For every code we verify
$\ds=2\Leftrightarrow|K_M|>1$, the orbit factorization
$M=\overline M\Pi_{K_M}$, $u_1=N(1-1/|K_M|)$,
$c_1=N/|K_M|$, and $r_M=k/2$.  We also enumerate the $2^{k/2}$ elements of
$\calA$ and count its units directly: an element is a unit when multiplication
by that class has full rank on the quotient algebra.  This independently
checks Proposition~\ref{prop:unitgroup}.

The algebraic dimensions in Proposition~\ref{prop:logical_decomp} require only
kernels, images, and a preimage calculation for the colon ideal; both equal
$k/2$ and are checked by row reduction.  Their exact minimum weights are a
separate distance problem.  For the standard rows we use the certified values in
\cite[Table~VII]{RowshanDevitt2026}; the detailed cluster search and complete
minimum-logical census remain in~\cite[Thm.~4, Table~V]{RowshanDevitt2026}.

The values $\mu(g)$ are computed by a meet-in-the-middle search over pairs and
triples of columns of $H$, which is exact through weight six; their minimum and
average give $\wamb$ and $\bar\mu_1$.

\subsection{Metacheck graph used by BP+OSD}

The algebra determines the metacheck \emph{space}, but belief propagation sees
a particular Tanner graph.  Row operations preserve $\calS$, $\ds$, and $K_M$
while changing that graph, so the decoder experiments use one deterministic
sparse basis throughout.  We enumerate the nonzero vectors of $\calC_M$, sort
them by Hamming weight and then by packed integer value in the fixed monomial
order, and greedily retain a row whenever it increases rank.  This gives a
minimum-total-weight basis; the second key only fixes ties reproducibly.

For the five standard codes, the resulting row-weight profiles are
\begin{align}
\qcode{72}{12}{6}:&(16,16,16,16,18,18),\notag\\
\qcode{90}{8}{10}:&(20,20,20,20),\notag\\
\qcode{108}{8}{10}:&(24,24,24,24),\notag\\
\text{Gross}:&(32,32,32,32,36,36),\notag\\
\qcode{288}{12}{18}:&(64,64,64,64,72,72).
\label{eq:sparse_weights}
\end{align}
These weights are implementation data rather than new code invariants: another
basis spans the same metacheck space but can give BP a different graph.

\subsection{Exact low-weight decoder enumeration}

To compare practical decoders with the structural repair limits, every single
and every double measurement fault is decoded from its metasyndrome, without
sampling, by minimum-weight lookup and by BP+OSD on the sparse basis above.  A
decoding fails if $\widehat\xi\ne\xi$.  This exhaustive test separates a
limitation of the syndrome information from a failure of a particular decoder
to find the minimum-weight repair.

For the exact minimum-weight repair curves, no random sampling is required.
Let $c_w$ be the number of metasyndromes whose selected minimum-weight coset
leader has weight $w$.  Then
\begin{equation}
 P_{\rm repair}^{\rm MW}(p)=1-\sum_w c_w p^w(1-p)^{N-w}.
 \label{eq:exact_repair}
\end{equation}
The expression is evaluated directly from the complete leader histograms on
the 14-point grid
\begin{equation}
\begin{aligned}
\mathcal P_{\rm ex}=\{&
0.001,\ 0.0013,\ 0.0016,\ 0.002,\ 0.0025,\\[-1mm]
&0.003,\ 0.0037,\ 0.0045,\ 0.0055,\ 0.0067,\\[-1mm]
&0.0082,\ 0.010,\ 0.012,\ 0.015\}.
\end{aligned}
\label{eq:exact_p_grid}
\end{equation}

\subsection{Sustained phenomenological memory}

A logical-memory statistic must allow a residual left by one round to be
removed in a later round.  Each trial therefore consists of $R=10$ noisy rounds
followed by one ideal round.  In round $t$ a fresh data error
$e_t\sim\mathrm{Bern}(p)^{\otimes n}$ is added to the residual $\rho$, the syndrome
$H\rho$ is measured with flips $\xi_t\sim\mathrm{Bern}(p)^{\otimes N}$, and a
single-shot decoder returns a data correction that is added to $\rho$.  The ideal
round applies one BP+OSD correction to the perfect syndrome, and the trial
fails if the final residual is a nontrivial logical operator.  Writing $p_D$ and $p_M$ for the data- and measurement-error rates,
respectively, the balanced assignment $p_D=p_M=p$ is used throughout.  For compact comparison across the fixed $R$-round experiment, we report the
effective per-round block logical-failure rate
\begin{equation}
 \eeff=1-(1-P_{\rm fail})^{1/R}.
 \label{eq:eps_eff}
\end{equation}
This monotone normalization converts the measured $R$-round block failure
probability, not a directly observed one-round logical-event probability.

Five decoders are compared.  The \emph{separated lookup} decoder repairs the
syndrome by minimum-weight metasyndrome lookup and then decodes $H$ by BP+OSD.
The \emph{separated BP+OSD} decoder uses BP+OSD on the sparse metacheck basis
for the repair stage.  The \emph{raw-syndrome heuristic} passes the measured syndrome directly to
BP+OSD on $H$ and applies the returned data estimate without imposing syndrome
consistency.  With measurement noise, $\widetilde s$ generally lies outside
$\im H$; for an encoded BB code, a single measurement flip already lies
outside the valid-syndrome ideal because that ideal contains no weight-one word.
This curve is therefore an ``ignore measurement noise'' reference rather than
an exact syndrome decoder.  The \emph{joint} decoder applies one BP+OSD to the
parity-check matrix $[H\mid I_N]$ over data and measurement variables and keeps
the data part.  Finally, the
\emph{space--time} decoder decodes the whole history at once, using detectors
$D_t=s_t\oplus s_{t-1}$ and the $(R+1)N\times R(n+N)$ phenomenological
space--time matrix; it is the repeated-extraction reference.
\begin{table}[t]
\centering
\caption{Decoders used in the sustained-memory experiment.}
\label{tab:decoder_summary}
\scriptsize
\renewcommand{\arraystretch}{1.08}
\begin{tabular}{@{}lp{2.5cm}p{2.7cm}@{}}
\toprule
Decoder & Measurement handling & Role \\
\midrule
Separated lookup & exact metasyndrome leader, then data BP+OSD & optimal separated-repair baseline \\
Separated BP+OSD & BP+OSD on the metacheck graph, then data BP+OSD & practical separated decoder \\
Raw-syndrome heuristic & measured syndrome passed directly to data BP+OSD & ignores measurement inconsistency \\
Joint & BP+OSD on $[H\mid I_N]$ & one-round joint data/measurement decoding \\
Space--time & all $R$ detector rounds decoded together & temporal-redundancy reference \\
\bottomrule
\end{tabular}
\end{table}

The completed sustained-memory sweep uses
$p\in\{0.001,0.002,0.003,0.005,0.007,0.010,0.015\}$ for the two codes
shown in Fig.~\ref{fig:sustained}; Table~\ref{tab:sustained_codes} reports
the corresponding two operating points for the remaining listed codes.
Each plotted point uses a distinct deterministic seed.  For
Fig.~\ref{fig:sustained}, seeds 101--170 are assigned sequentially in fixed
code--decoder--$p$ order; the additional operating points in
Table~\ref{tab:sustained_codes} use seeds 501--524.  Together with the stopping
rule below, these choices fix the Monte Carlo experiment.

Each point is sampled until 100 failures or $2\times10^4$ trials.  Because
this is an adaptive stopping rule, fixed-sample Wilson intervals are not used.
Instead, we report a 95\% anytime-valid beta--binomial mixture confidence
sequence~\cite{Howard2021}.  If $F_t$ failures have been observed in $t$ trials,
the confidence set for the block-failure probability $\theta$ is
\begin{equation}
 \left\{\theta\in(0,1):
 \frac{B(F_t+\tfrac12,t-F_t+\tfrac12)}
      {B(\tfrac12,\tfrac12)\theta^{F_t}(1-\theta)^{t-F_t}}
 <\frac{1}{0.05}\right\},
 \label{eq:anytime_cs}
\end{equation}
which remains valid at the random stopping time.  Its endpoints are mapped
monotonically through~\eqref{eq:eps_eff}.  Shaded regions in
Fig.~\ref{fig:sustained} show these confidence sequences; a downward open
triangle denotes the corresponding one-sided upper limit when no failures were
observed.  Points with block failure probability above $0.95$ are omitted as
saturated.

\subsection{Decoder settings and reproducibility}

All BP+OSD decoders use the \texttt{ldpc} package, version 2.4.1, with the
\texttt{BpOsdDecoder} class: minimum-sum BP, parallel scheduling, at most 100
BP iterations, and ordered-statistics decoding combination sweep (OSD-CS),
implemented by \texttt{osd\_cs}, of order two~\cite{Roffe2020,RoffeLDPC2022}.  The min-sum scaling is one and the decoder input is a syndrome vector.  The
channel prior of every data or measurement variable is the corresponding
physical rate $p$.  The metacheck generator is
the deterministic sparse basis specified above; lookup ties are
resolved by breadth-first search in the fixed coordinate order.

The generators used for every listed code are given in
Tables~\ref{tab:standard_codes} and~\ref{tab:diagnostic_codes}.  The monomial
ordering is $1,x,\ldots,x^{l-1},y,xy,\ldots,x^{l-1}y,\ldots$, with the first
index varying fastest.  Random errors are sampled independently with a
Mersenne-Twister pseudorandom generator initialized by the seeds specified
above.  The physical error rate, number of rounds, stopping rule, decoder
parameters, and confidence-sequence construction are therefore fixed by the
present section.  The numerical coordinate lists used in the figures are
embedded directly in the manuscript source.

\section{Numerical Results}
\label{sec:num_results}

\subsection{Structural predictions}

We first compare the exact structural predictions with exhaustive low-weight
decoding.  This provides a decoder-independent reference point before turning
to finite-sample memory curves.  Table~\ref{tab:verify} reports the
syndrome-side checks.  For every code, the
subgroup computed from definition~\eqref{eq:KM} reproduces the independently
counted values of $\ds$, $u_1$, and $c_1$, as required by
Theorem~\ref{thm:ds2} and Corollary~\ref{cor:ambiguity}.  The generators of
$K_M$ make the orbit geometry visible: the Gross metachecks are invariant
under $x^6$, the $\qcode{288}{12}{18}$ metachecks under $x^6$ and $y^6$, and
the $\qcode{90}{8}{10}$ metachecks under $x^3$.

\begin{table*}[t]
\centering
\caption{Exact checks of Theorem~\ref{thm:ds2} and
Corollary~\ref{cor:ambiguity}, and exhaustive decoding of all single and
double measurement faults.  Entries in the last four columns are failures
($\widehat\xi\ne\xi$) out of $N$ or $\binom N2$ patterns.}
\label{tab:verify}
\renewcommand{\arraystretch}{1.10}
\begin{tabular}{@{}lccccccccc@{}}
\toprule
& & & \multicolumn{2}{c}{$u_1$} & & \multicolumn{2}{c}{single faults} & \multicolumn{2}{c}{double faults} \\
\cmidrule(lr){4-5}\cmidrule(lr){7-8}\cmidrule(lr){9-10}
Code & $K_M$ & $|K_M|$ & predicted & counted & $c_1$ & lookup & BP+OSD & lookup & BP+OSD \\
\midrule
$\qcode{72}{12}{6}$ & $\{1\}$ & 1 & 0 & 0 & 36 & $0/36$ & $9/36$ & $603/630$ & $603/630$ \\
$\qcode{90}{8}{10}$ & $\langle x^3\rangle$ & 5 & 36 & 36 & 9 & $36/45$ & $40/45$ & $984/990$ & $986/990$ \\
$\qcode{108}{8}{10}$ & $\langle x^3,y^3\rangle$ & 6 & 45 & 45 & 9 & $45/54$ & $45/54$ & $1425/1431$ & $1425/1431$ \\
Gross & $\langle x^6\rangle$ & 2 & 36 & 36 & 36 & $36/72$ & $36/72$ & $2529/2556$ & $2529/2556$ \\
$\qcode{288}{12}{18}$ & $\langle x^6,y^6\rangle$ & 4 & 108 & 108 & 36 & -- & -- & -- & -- \\
BB$(6,3)$ & $\langle x^3\rangle$ & 2 & 9 & 9 & 9 & $9/18$ & $9/18$ & $147/153$ & $147/153$ \\
BB$(9,3)$ & $\langle x^3\rangle$ & 3 & 18 & 18 & 9 & $18/27$ & $22/27$ & $345/351$ & $347/351$ \\
\bottomrule
\end{tabular}
\end{table*}

The decoder columns show that BP+OSD attains the optimal single-fault count on
the Gross, $\qcode{108}{8}{10}$, and BB$(6,3)$ codes but not on the other three.
The most informative case is $\qcode{72}{12}{6}$.  There $\ds=3$, so
standard bounded-distance reasoning guarantees that minimum-weight repair
corrects every single fault, and the lookup does so.  BP+OSD fails on $9$ of the
$36$ single faults.  In all nine cases BP converges: it returns a vector
$\widehat\xi\ne\xi$ with $M\widehat\xi=M\xi$, so $\widehat\xi+\xi$ is a nonzero
syndrome word and $\wt(\widehat\xi)\ge2$.  OSD is never invoked because the BP
output already satisfies the metasyndrome, and raising the OSD order to seven
leaves the count unchanged.  A syndrome-consistent repair is therefore not a
minimum-weight repair, and on this code the difference converts a quadratic
repair failure into a linear one.

On the logical side, the component-distance certification in
\cite[Table~VII]{RowshanDevitt2026} gives equal component minima for
$\qcode{72}{12}{6}$, $\qcode{90}{8}{10}$, Gross, and
$\qcode{288}{12}{18}$.  The exception is $\qcode{108}{8}{10}$, where
$d_{\rm col}=10<d_{\rm ann}=12$.  Thus its weight-ten logicals are not merely
observed to use both blocks; the projection $\pi_X$ places every
minimum-distance class in the colon component.

\subsection{Exact syndrome-repair curves}

Table~\ref{tab:leader_hist} gives the complete minimum-weight leader
histograms for the small metasyndrome spaces used here.  A useful way to read
the table is that $c_1$ counts distinct single-fault labels.  The orbit theorem
gives $c_1=N/|K_M|$, so the Gross code has only 36 distinct single-fault labels
for 72 measurement locations, whereas all 36 locations of
$\qcode{72}{12}{6}$ are distinct.

\begin{table}[t]
\centering
\caption{Minimum-weight metasyndrome leader histograms.  Only nonzero entries are shown.}
\label{tab:leader_hist}
\begin{tabular}{@{}lcccc@{}}
\toprule
Code & $N$ & $r_M$ & $\ds$ & $(c_0,c_1,c_2)$ \\
\midrule
$\qcode{72}{12}{6}$ & 36 & 6 & 3 & $(1,36,27)$ \\
$\qcode{90}{8}{10}$ & 45 & 4 & 2 & $(1,9,6)$ \\
$\qcode{108}{8}{10}$ & 54 & 4 & 2 & $(1,9,6)$ \\
Gross $\qcode{144}{12}{12}$ & 72 & 6 & 2 & $(1,36,27)$ \\
$\qcode{288}{12}{18}$ & 144 & 6 & 2 & $(1,36,27)$ \\
BB$(6,3)$ $\qcode{36}{8}{4}$ & 18 & 4 & 2 & $(1,9,6)$ \\
BB$(9,3)$ & 27 & 4 & 2 & $(1,9,6)$ \\
\bottomrule
\end{tabular}
\end{table}

For $\qcode{72}{12}{6}$ all single faults are unique and 603 of the
$\binom{36}{2}=630$ weight-two patterns are not selected leaders of their
metasyndromes.  Thus
\begin{equation}
 P_{\rm repair}^{\rm MW}(p)=603p^2+O(p^3),
 \label{eq:603}
\end{equation}
so the first unavoidable failures are quadratic in the measurement-error rate.

Figure~\ref{fig:exact_repair} evaluates the exact minimum-weight
metasyndrome-leader expression~\eqref{eq:exact_repair} on the grid
$\mathcal P_{\rm ex}$ in~\eqref{eq:exact_p_grid}.  These curves contain no
Monte Carlo sampling error: every value follows from
the complete leader histogram in Table~\ref{tab:leader_hist}.  The denser
grid is used only to make the finite-length behavior easier to read; it does
not change the asymptotic coefficients derived above.

\begin{figure*}[t]
\centering
\begin{tikzpicture}
\begin{loglogaxis}[
 width=0.84\textwidth,height=6.4cm,
 xlabel={measurement-flip probability $p_M$},
 ylabel={exact reconstruction failure $P_{\rm repair}^{\rm MW}$},
 xmin=9e-4,xmax=1.7e-2,ymin=4e-4,ymax=0.65,
 grid=both,grid style={gray!20},
 legend pos=north west,
 legend style={font=\scriptsize,cells={anchor=west},legend columns=2,
               /tikz/column 2/.style={column sep=7pt}}]
\addplot[blue,mark=*,thick] coordinates {(0.001,0.000589798230044) (0.0013,0.000990170181973) (0.0016,0.00148999412327) (0.002,0.00230764654095) (0.0025,0.00356615056188) (0.003,0.00507899192772) (0.0037,0.00760761278817) (0.0045,0.011056975152) (0.0055,0.0161589444302) (0.0067,0.0233580169384) (0.0082,0.0338613858624) (0.01,0.0484271180369) (0.012,0.06678159908) (0.015,0.0978250309639)};
\addlegendentry{$\qcode{72}{12}{6}$}
\addplot[orange!85!black,mark=square*,thick] coordinates {(0.001,0.0359460143601) (0.0013,0.0466956888163) (0.0016,0.0574225938389) (0.002,0.071683371054) (0.0025,0.0894306468251) (0.003,0.107076402233) (0.0037,0.131583401966) (0.0045,0.15926952763) (0.0055,0.193327428626) (0.0067,0.233290108162) (0.0082,0.281697823291) (0.01,0.337312067063) (0.012,0.395726848308) (0.015,0.47640550561)};
\addlegendentry{Gross $\qcode{144}{12}{12}$}
\addplot[green!55!black,mark=triangle*,thick,dashed] coordinates {(0.001,0.00899369033053) (0.0013,0.0116891811842) (0.0016,0.0143833772843) (0.002,0.0179735411379) (0.0025,0.0224577154857) (0.003,0.0269377632524) (0.0037,0.0332024883023) (0.0045,0.0403510112465) (0.0055,0.0492686851919) (0.0067,0.0599414114912) (0.0082,0.073234768763) (0.01,0.0891104764827) (0.012,0.106643100707) (0.015,0.132708544132)};
\addlegendentry{BB$(6{,}3)$ $\qcode{36}{8}{4}$}
\addplot[black!65,densely dashdotted,thick,domain=0.001:0.015,samples=100]
 {1-(1-x)^36};
\addlegendentry{Gross orbit bound}
\addplot[black!65,densely dotted,thick,domain=0.001:0.015,samples=100]
 {1-(1-x)^9};
\addlegendentry{BB$(6{,}3)$ orbit bound}
\node[font=\scriptsize,blue!75!black,anchor=west]
 at (axis cs:0.00125,0.00135) {$603p_M^2$ at low $p_M$};
\end{loglogaxis}
\end{tikzpicture}
\caption{Exact minimum-weight syndrome-repair failure on a denser
measurement-noise grid.  The $\qcode{72}{12}{6}$ curve begins
quadratically with coefficient $603$.  For Gross and BB$(6,3)$, the black
curves show the finite-rate orbit bounds for metasyndrome-only exact repair,
$1-(1-p_M)^{36}$ and $1-(1-p_M)^9$, from
Corollary~\ref{cor:ambiguity}.  These finite-rate bounds remain distinct
from the exact curves away from the low-noise limit.}
\label{fig:exact_repair}
\end{figure*}

Because these curves use exact minimum-weight lookup, they isolate the
syndrome structure from decoder heuristics.  For $\qcode{72}{12}{6}$,
minimum-weight repair corrects every single measurement fault and failure
begins at weight two.  Gross has the larger quantum distance, but its
metasyndrome sees only the parity within 36 translation pairs.  At
$p_M=10^{-2}$ the orbit argument alone forces
$P_{\rm repair}\ge0.3036$, compared with the exact value $0.3373$.
For BB$(6,3)$ the corresponding structural bound is $0.0865$, already close
to the exact value $0.0891$.  Thus the translation factorization explains a
substantial part of the finite-rate repair failure, not only its first-order
slope.

\subsection{Sustained-memory decoder comparison}

\begin{figure*}[t]
\centering
\begin{tikzpicture}
\begin{groupplot}[
 group style={group size=2 by 1,horizontal sep=1.5cm},
 width=0.47\textwidth,height=6.4cm,xmode=log,ymode=log,
 xmin=8e-4,xmax=0.02,ymin=5e-7,ymax=1,
 xlabel={$p_D=p_M=p$},grid=both,grid style={gray!20},
 legend style={font=\scriptsize,cells={anchor=west},legend columns=4,
  at={(1.1,-0.24)},anchor=north,column sep=6pt}]
\nextgroupplot[title={$\qcode{72}{12}{6}$, $\ds=3$},
 ylabel={effective per-round block error $\eeff$}]

\addplot[name path=a0lo,draw=none,forget plot] coordinates {(0.001,9.99909e-07) (0.002,0.000133052) (0.003,0.000442922) (0.005,0.00238275) (0.007,0.00510984) (0.01,0.0110487) (0.015,0.0514944)};
\addplot[name path=a0hi,draw=none,forget plot] coordinates {(0.001,9.12878e-05) (0.002,0.000421841) (0.003,0.000990056) (0.005,0.00509806) (0.007,0.0107043) (0.01,0.0226325) (0.015,0.101232)};
\addplot[blue,opacity=0.07,forget plot] fill between[of=a0lo and a0hi];
\addplot[blue,mark=*,thick] coordinates {(0.001,2.00018e-05) (0.002,0.000250282) (0.003,0.000680228) (0.005,0.00357053) (0.007,0.00756788) (0.01,0.0161641) (0.015,0.0736657)};
\addlegendentry{separated lookup}
\addplot[name path=a2lo,draw=none,forget plot] coordinates {(0.001,0.000332464) (0.002,0.00170975) (0.003,0.00242227) (0.005,0.0106118) (0.007,0.0153339) (0.01,0.0219878) (0.015,0.111424)};
\addplot[name path=a2hi,draw=none,forget plot] coordinates {(0.001,0.000748499) (0.002,0.00369083) (0.003,0.00518029) (0.005,0.0217636) (0.007,0.0311045) (0.01,0.0441261) (0.015,0.223721)};
\addplot[gray,opacity=0.07,forget plot] fill between[of=a2lo and a2hi];
\addplot[gray,mark=square*,thick] coordinates {(0.001,0.000512664) (0.002,0.00257481) (0.003,0.00362884) (0.005,0.015535) (0.007,0.0223145) (0.01,0.0318129) (0.015,0.160649)};
\addlegendentry{raw-syndrome heuristic}
\addplot[name path=a3lo,draw=none,forget plot] coordinates {(0.001,4.98737e-10) (0.002,3.13442e-06) (0.003,0.000104795) (0.005,0.000699478) (0.007,0.00242674) (0.01,0.00759079) (0.015,0.0274476)};
\addplot[name path=a3hi,draw=none,forget plot] coordinates {(0.001,5.83498e-05) (0.002,0.000109845) (0.003,0.000369891) (0.005,0.00154553) (0.007,0.0051896) (0.01,0.0157218) (0.015,0.0547328)};
\addplot[green!55!black,opacity=0.07,forget plot] fill between[of=a3lo and a3hi];
\addplot[green!55!black,mark=triangle*,thick] coordinates {(0.001,5.00011e-06) (0.002,3.00041e-05) (0.003,0.000210199) (0.005,0.00106725) (0.007,0.00363544) (0.01,0.0111723) (0.015,0.0395761)};
\addlegendentry{joint $[H\mid I_N]$}
\addplot[name path=a4lo,draw=none,forget plot] coordinates {(0.003,6.15366e-08) (0.005,5.83905e-05) (0.007,0.000211093) (0.01,0.000670152) (0.015,0.00254508)};
\addplot[name path=a4hi,draw=none,forget plot] coordinates {(0.003,7.05003e-05) (0.005,0.000275952) (0.007,0.000554444) (0.01,0.00148236) (0.015,0.00543564)};
\addplot[red,opacity=0.07,forget plot] fill between[of=a4lo and a4hi];
\addplot[red,mark=pentagon*,thick] coordinates {(0.003,1.00005e-05) (0.005,0.000140088) (0.007,0.000355568) (0.01,0.00102313) (0.015,0.00380998)};
\addlegendentry{space--time, $R=10$}
\addplot[red,only marks,mark=triangle*,mark options={rotate=180,fill=white},forget plot] coordinates {(0.001,4.25983e-05) (0.002,4.25983e-05)};
\addplot[black,thin,densely dotted,domain=8e-4:0.02,samples=2,forget plot] {x};
\nextgroupplot[title={Gross $\qcode{144}{12}{12}$, $\ds=2$}]

\addplot[name path=b0lo,draw=none,forget plot] coordinates {(0.001,0.0057857) (0.002,0.0194509) (0.003,0.033282) (0.005,0.0736451) (0.007,0.129993) (0.01,0.173314)};
\addplot[name path=b0hi,draw=none,forget plot] coordinates {(0.001,0.0120774) (0.002,0.0391764) (0.003,0.0660226) (0.005,0.144699) (0.007,0.266924) (0.01,0.393239)};
\addplot[blue,opacity=0.07,forget plot] fill between[of=b0lo and b0hi];
\addplot[blue,mark=*,thick] coordinates {(0.001,0.00855219) (0.002,0.0281973) (0.003,0.0478536) (0.005,0.105251) (0.007,0.189184) (0.01,0.262473)};
\addplot[name path=b2lo,draw=none,forget plot] coordinates {(0.001,0.00101409) (0.002,0.00363333) (0.003,0.0214289) (0.005,0.0317127) (0.007,0.0573246) (0.01,0.0703932)};
\addplot[name path=b2hi,draw=none,forget plot] coordinates {(0.001,0.00221935) (0.002,0.00768441) (0.003,0.043037) (0.005,0.0629891) (0.007,0.112566) (0.01,0.138235)};
\addplot[gray,opacity=0.07,forget plot] fill between[of=b2lo and b2hi];
\addplot[gray,mark=square*,thick] coordinates {(0.001,0.00153897) (0.002,0.00540964) (0.003,0.0310169) (0.005,0.0456285) (0.007,0.081946) (0.01,0.100588)};
\addplot[name path=b3lo,draw=none,forget plot] coordinates {(0.002,4.98737e-10) (0.003,9.89927e-06) (0.005,0.000391186) (0.007,0.00171738) (0.01,0.00734855) (0.015,0.0445776)};
\addplot[name path=b3hi,draw=none,forget plot] coordinates {(0.002,5.83498e-05) (0.003,0.000143683) (0.005,0.000877138) (0.007,0.00370686) (0.01,0.0152344) (0.015,0.0878414)};
\addplot[green!55!black,opacity=0.07,forget plot] fill between[of=b3lo and b3hi];
\addplot[green!55!black,mark=triangle*,thick] coordinates {(0.002,5.00011e-06) (0.003,5.00113e-05) (0.005,0.000601831) (0.007,0.00258613) (0.01,0.0108213) (0.015,0.0638588)};
\addplot[green!55!black,only marks,mark=triangle*,mark options={rotate=180,fill=white},forget plot] coordinates {(0.001,4.25983e-05)};
\addplot[name path=b4lo,draw=none,forget plot] coordinates {(0.01,4.98737e-10) (0.015,3.17601e-05)};
\addplot[name path=b4hi,draw=none,forget plot] coordinates {(0.01,5.83498e-05) (0.015,0.000212322)};
\addplot[red,opacity=0.07,forget plot] fill between[of=b4lo and b4hi];
\addplot[red,mark=pentagon*,thick] coordinates {(0.01,5.00011e-06) (0.015,9.50406e-05)};
\addplot[red,only marks,mark=triangle*,mark options={rotate=180,fill=white},forget plot] coordinates {(0.001,4.25983e-05) (0.002,4.25983e-05) (0.003,4.25983e-05) (0.005,4.25983e-05) (0.007,4.25983e-05)};
\addplot[black,thin,densely dotted,domain=8e-4:0.02,samples=2] {x};
\end{groupplot}
\end{tikzpicture}
\caption{Sustained phenomenological memory with $R=10$ noisy rounds and one
ideal final round.  Markers are the observed effective rates $\eeff$; connecting
segments are guides between sampled physical error rates and are not fitted or
smoothed.  Light ribbons are 95\% anytime-valid beta--binomial mixture
confidence sequences, retained because the runs use adaptive stopping.
Downward open triangles denote one-sided 95\% upper limits for zero-failure
runs.  The dotted line is $\eeff=p$.  The separated BP+OSD repair variant is
omitted here for clarity; its single-fault behavior is reported exhaustively in
Table~\ref{tab:verify}.  Saturated points with block failure above $0.95$ are
not shown.}
\label{fig:sustained}
\end{figure*}

Figure~\ref{fig:sustained} compares the two structural examples under a
sustained memory experiment, while Table~\ref{tab:sustained_codes} reports the
same metrics for the remaining listed codes.  The plotted markers are the
Monte Carlo observations; we do not impose a smoothing model.  The confidence
ribbons are deliberately conservative because they remain valid under the
adaptive stopping rule.  Their width is largest at rare-event points: for
example, the $\qcode{72}{12}{6}$ space--time point at $p=3\times10^{-3}$ has
only two failures in $2\times10^4$ trials, while the Gross joint point at the
same $p$ has ten.  Narrower bands at those error rates require substantially
more trials rather than graphical smoothing.

For $\qcode{72}{12}{6}$, separated lookup lies below the raw-syndrome
heuristic at $p=10^{-3}$ and $3\times10^{-3}$ with nonoverlapping 95\%
confidence sequences.  The separated BP+OSD repair variant behaves differently:
at $p=10^{-3}$ its central estimate is $1.1\times10^{-3}$, consistent with the
exhaustive test in Table~\ref{tab:verify}, where BP converges to a wrong
metasyndrome-consistent repair on nine of 36 single faults.  Increasing the OSD
order to seven does not change those nine cases.

For Gross, separated lookup lies above the raw-syndrome heuristic with
nonoverlapping confidence sequences already at $p=10^{-3}$.  This comparison
is descriptive because the raw curve is not a consistency-enforcing decoder.
The pair $u_1=36$, $\wamb=6$ nevertheless explains why a separated
metasyndrome-only repair stage is difficult: single-fault ambiguities occur at
first order, and their lightest data preimage has weight six.

\begin{table*}[t]
\centering
\caption{Effective per-round block logical-failure rate $\eeff$ in the sustained memory ($R=10$)
for all standard codes and BB$(6,3)$.  ``sat.'' marks points with block failure
probability above $0.95$.  Zero failures in $2\times10^4$ trials gives
$\eeff<4.3\times10^{-5}$ with the anytime-valid confidence sequence.  The
number in parentheses in each joint-decoder entry is the observed failure
count.}
\label{tab:sustained_codes}
\renewcommand{\arraystretch}{1.10}
\begin{tabular}{@{}lccccccccc@{}}
\toprule
& & & & \multicolumn{2}{c}{separated lookup} & \multicolumn{2}{c}{raw heuristic} & \multicolumn{2}{c}{joint} \\
\cmidrule(lr){5-6}\cmidrule(lr){7-8}\cmidrule(lr){9-10}
Code & $\ds$ & $u_1/N$ & $\wamb$ & $p=0.003$ & $0.005$ & $0.003$ & $0.005$ & $0.003$ & $0.005$ \\
\midrule
$\qcode{72}{12}{6}$ & 3 & $0/36$ & -- & $6.8\times10^{-4}$ & $3.6\times10^{-3}$ & $3.6\times10^{-3}$ & $1.6\times10^{-2}$ & $2.1\times10^{-4}\,(42)$ & $1.1\times10^{-3}\,(100)$ \\
$\qcode{90}{8}{10}$ & 2 & $36/45$ & 4 & $2.8\times10^{-2}$ & $9.3\times10^{-2}$ & $3.3\times10^{-2}$ & $7.2\times10^{-2}$ & $1.6\times10^{-4}\,(32)$ & $1.2\times10^{-3}\,(100)$ \\
$\qcode{108}{8}{10}$ & 2 & $45/54$ & 4 & $3.8\times10^{-2}$ & $1.1\times10^{-1}$ & $3.6\times10^{-2}$ & $1.2\times10^{-1}$ & $1.3\times10^{-4}\,(26)$ & $1.2\times10^{-3}\,(100)$ \\
Gross & 2 & $36/72$ & 6 & $4.8\times10^{-2}$ & $1.1\times10^{-1}$ & $3.1\times10^{-2}$ & $4.6\times10^{-2}$ & $5.0\times10^{-5}\,(10)$ & $6.0\times10^{-4}\,(100)$ \\
$\qcode{288}{12}{18}$ & 2 & $108/144$ & $\ge8$ & $2.0\times10^{-1}$ & sat. & sat. & sat. & $<4.3\times10^{-5}\,(0)$ & $3.0\times10^{-5}\,(6)$ \\
BB$(6,3)$ & 2 & $9/18$ & 2 & $5.0\times10^{-3}$ & $1.7\times10^{-2}$ & $1.1\times10^{-2}$ & $2.5\times10^{-2}$ & $2.4\times10^{-3}\,(100)$ & $6.4\times10^{-3}\,(100)$ \\
\bottomrule
\end{tabular}
\end{table*}

Across the other codes in Table~\ref{tab:sustained_codes}, separated repair is
most favorable for $\qcode{72}{12}{6}$ and BB$(6,3)$; the four rows with
$\ds=2$ and $\wamb\ge4$ show little benefit or a penalty relative to the raw
heuristic at the sampled points.  This is suggestive rather than a ranking law,
since the present code set does not separate $\wamb$ from other correlated
quantities such as quantum distance.

The joint $[H\mid I_N]$ decoder has the lowest central estimate among the
one-round decoders at every reported operating point.  Several low-rate points
are based on few failures: at $p=3\times10^{-3}$ the five standard codes record
42, 32, 26, 10, and 0 failures in $2\times10^4$ trials, respectively.  The
confidence sequences, rather than ratios of point estimates, should therefore
be used when comparing those points.  The joint decoder uses the full redundant
row set of $H$ but does not use $M$ explicitly; a rank-reduced-$H$ ablation
would be needed to isolate the gain due specifically to redundant checks.  The
space--time decoder remains the repeated-extraction reference and can
outperform the one-round joint decoder by a wide margin.

\section{Design Lessons and Limitations}
\label{sec:discussion}

Equation~\eqref{eq:rm} fixes how many independent metachecks exist, while
Theorem~\ref{thm:ds2} and Proposition~\ref{prop:unitgroup} describe what those
checks can distinguish.  The quotient algebra gives a family-level constraint:
when $k$ stays bounded and $N$ grows, the number of available metasyndromes
cannot keep pace with the number of possible single-fault locations.  With no
intervening data fault, the same orbit count gives $u_1$ a direct measurement
meaning: it is exactly the smallest partial second measurement that removes all
single-fault collisions.  The
finite-rate bound~\eqref{eq:finite_repair_bound} applies only after the measured
syndrome has been compressed to its metasyndrome.  A joint decoder retains
$\widetilde s$ together with a data prior and is not subject to that
information-loss bound.

Logical distance is a separate prerequisite.  Proposition~\ref{prop:logical_decomp}
identifies an annihilator subspace and a colon quotient of the selected logical
space; the distance is then the smaller of the two minima by definition.  A large syndrome distance
therefore does not protect against a short logical operator: the symmetric
BB$(6,6)$ example has $\ds=4$ but $d=2$, as the elementary cancellation
argument of Section~\ref{sec:logical} already shows.  Conversely,
$\qcode{108}{8}{10}$ has $d_{\rm col}=10<d_{\rm ann}=12$.  The logical
decomposition and component certifications used for these statements are taken
from~\cite[Thm.~1, Lem.~1, Cor.~2, Prop.~3, Table~VII]{RowshanDevitt2026};
the present paper uses it as a guardrail for the syndrome-repair study rather
than duplicating its distance-search machinery.

On the measurement side, $(\wamb,\bar\mu_1)$ describes the minimum and average
minimum data cost of the first-order orbit ambiguity.  These are finite-length
quantities, not a confinement theorem.  They are useful because two codes can
have the same single-fault collision rate but very different data consequences,
as illustrated by BB$(6,3)$ and Gross in Section~\ref{sec:soundness}.

The $\qcode{72}{12}{6}$ result also shows that decoder details can change the
leading behavior.  Although $\ds=3$ guarantees unique minimum-weight repair of
a single fault, BP converges to a different metasyndrome-consistent vector for
nine of 36 single faults; increasing the ordered-statistics decoding order to
seven does not alter those cases because post-processing is not invoked after
BP has already satisfied the metasyndrome.

The sustained-memory experiment is phenomenological and the adaptive runs are
not threshold estimates.  Rare-event points have wide confidence sequences;
more trials, rather than smoothing, are required to narrow them.  The
raw-syndrome curve is only a heuristic reference because noisy syndromes need
not belong to $\im H$.  Likewise, the present joint-decoder comparison does
not isolate the contribution of redundant rows; a rank-reduced-$H$ ablation
would provide that test.  Circuit faults, leakage, hook errors, and
schedule-dependent correlations remain outside the present model.

The exact component distances used here are certified in
\cite[Table~VII]{RowshanDevitt2026}.  For larger BB searches, the rank, quotient,
$K_M$, and unit-group calculations are natural inexpensive first tests;
distance certification is needed only for candidates that survive them.

\section{Conclusion}
\label{sec:conclusion}

BB metachecks are naturally described through the valid-syndrome ideal and its
quotient algebra.  The translation group maps into $\calA^\times$ with kernel
$K_M$, which explains distance-two syndrome codes, counts distinguishable
single faults, fixes the partial-repetition cost in the static-data model, and
yields a bounded-dimension obstruction for metasyndrome-only exact repair.
The logical calculation supplies the complementary guardrail: the
annihilator subspace and colon quotient each have dimension $k/2$
\cite[Lem.~1]{RowshanDevitt2026}, while the selected distance is simply the
smaller of their two minima.

The finite examples expose both effects.  Minimum-weight repair of
$\qcode{72}{12}{6}$ corrects every single measurement fault, whereas Gross has
36 unavoidable single-fault mis-reconstructions for any metasyndrome-only
repair rule.  The $\qcode{108}{8}{10}$ example shows why the logical check is
needed: its colon component reaches weight ten before the annihilator component
reaches weight twelve.  The sustained experiments are consistent with this
separation of roles, while joint data--measurement decoding avoids the premise
of the metasyndrome-only bound and performs better than separated repair on the
more ambiguous rows.  These finite-code criteria indicate when BB redundancy
is informative and where stronger confinement analysis or temporal redundancy
is still required.

\appendices

\section{Proofs of the Structural Statements}
\label{app:proofs}

The short exact sequence in Proposition~\ref{prop:logical_decomp} and its
$k/2$ dimension balance are taken from
\cite[Thm.~1, Lem.~1]{RowshanDevitt2026}; their full proofs are not repeated
here.  Equation~\eqref{eq:component_distance_identity} follows immediately
from the definitions of the two component minima.

\begin{proof}[Proof of Theorem~\ref{thm:ds2}]
If $g,h\in K_M$, then $gc=c$ and $hc=c$ for every $c\in\calC_M$, so
$ghc=c$; inverses follow from bijective translation.  Thus $K_M$ is a subgroup.
A weight-two syndrome word is a translate of $1+g$ for some $g\ne1$.
It belongs to $\calS=\calC_M^\perp$ exactly when it is orthogonal to every
metacheck.  Because $\calC_M$ is translation invariant, this holds exactly when
every metacheck takes equal values at $h$ and $gh$ for every $h$, equivalently,
up to replacing $g$ by $g^{-1}$, when $(1+g)c=0$ for every $c\in\calC_M$.
Since $K_M$ is closed under inverses, this is precisely $g\in K_M$.
As noted before Theorem~\ref{thm:ds2}, $\ds=1$ is impossible when $k>0$,
proving part~(i).  The same condition is
equivalent to equality of the corresponding columns of $M$, which gives
part~(ii).

For part~(iii), choose one representative from each $K_M$-orbit.  Part~(ii)
shows that all columns of $M$ within an orbit are equal.  Replacing each orbit
by a single column gives $\overline M$, while $\Pi_{K_M}$ sums the coordinates
in that orbit over $\F_2$.  Column by column this is exactly
$M=\overline M\Pi_{K_M}$.
\end{proof}

\begin{proof}[Proof of Corollary~\ref{cor:ambiguity}]
Conditioned on one fault, the $\kappa$ positions in an orbit have the same
metasyndrome, so a metasyndrome-only exact-reconstruction rule can return at
most one of them correctly.  This gives the fraction $1-1/\kappa$.

For independent faults, reveal the full orbit-parity vector to the repair rule;
this is at least as informative as the metasyndrome because of
\eqref{eq:orbit_factorization}.  For one orbit and $p_M\le1/2$, the most
probable even pattern is the all-zero pattern and the most probable odd pattern
is any fixed weight-one pattern.  The maximum unconditional success probability
is therefore $(1-p_M)^{\kappa-1}$.  Multiplying over the $N/\kappa$ independent
orbits gives success probability at most
$(1-p_M)^{N(1-1/\kappa)}=(1-p_M)^{u_1}$, proving
\eqref{eq:finite_repair_bound}.
\end{proof}

\begin{proof}[Proof of Proposition~\ref{prop:unitgroup}]
Identify $\F_2^N$ with $\calR$ in the fixed monomial basis.  Since
$\ker M=\calS$, the map $M$ induces a linear isomorphism
$\calR/\calS\to\F_2^{r_M}$, which proves the first statement.  Every monomial
$h\in G$ is a unit of $\calR$, hence $\bar h$ is a unit of $\calA$, and
$\overline{hg}=\bar h\,\bar g$.  The kernel of $\phi$ consists of those $g$
with $1+g\in\calS$.  By Theorem~\ref{thm:ds2}, and since $K_M$ is closed under
inverses, this set is exactly $K_M$.  The image therefore contains
$|G/K_M|$ distinct single-fault classes.  The inequalities in
\eqref{eq:unit_bound} follow because $\calA$ has $2^{k/2}$ elements and zero is
not a unit.  Equation~\eqref{eq:u1_unit_bound} follows from
$u_1=N-|G/K_M|$.  Finally, Theorem~\ref{thm:ds2} gives $\ds\ge3$ exactly when
$K_M$ is trivial, i.e., when $\phi$ is injective.
\end{proof}

\begin{proof}[Proof of Corollary~\ref{cor:family}]
Proposition~\ref{prop:unitgroup} gives
$u_1\ge N-(2^{k/2}-1)\ge N-(2^{k_{\max}/2}-1)$, which proves
\eqref{eq:family_u1}.  For sufficiently large $N$, this also forces $\ds=2$.
Substitution into~\eqref{eq:finite_repair_bound} shows that the failure
probability tends to one for any fixed $p_M>0$.
\end{proof}

\section{Algorithms for the Exact Calculations}
\label{app:algorithms}

All ranks and null spaces are computed by Gaussian elimination over $\F_2$.
The subgroup $K_M$ is found by applying each of the $N$ translations to a basis
of $\calC_M$.  The metasyndrome leader histogram is computed by breadth-first
search over the $2^{r_M}$ metasyndrome states, adding one syndrome coordinate
at a time in the fixed order.  The unit group $\calA^\times$ is obtained by
representing multiplication by each nonzero quotient element as a
$k/2\times k/2$ binary matrix and testing that matrix for full rank.

The dimensions in Proposition~\ref{prop:logical_decomp} are checked by binary
row reduction, while the exact component-distance certifications are those of
\cite[Table~VII]{RowshanDevitt2026}.  They are not rerun as part of the decoder
study.
For $\wamb$, translation equivariance reduces the search to one target
syndrome $1+g$ for each nontrivial $g\in K_M$; storing the syndromes of all
column pairs and triples gives an exhaustive search through data weight six.

The space--time matrix of the sustained experiment has one block row per
detector round.  Data errors of round $t$ enter block row $t$ through $H$, and
the measurement error of noisy round $t$ enters block rows $t$ and $t+1$
through identity blocks.  Every space--time correction is checked to reproduce
its detector vector exactly.

\section{Implementation and Reproducibility Details}
\label{app:repro}

Multiplication matrices are constructed directly from the monomial supports in
Tables~\ref{tab:standard_codes} and~\ref{tab:diagnostic_codes}, using the
monomial ordering $1,x,\ldots,x^{l-1},y,xy,\ldots,x^{l-1}y,\ldots$, with the
first index varying fastest.  For every code we verify $H_XH_Z^\top=0$,
$r_M=k/2$, $MH=0$, and $\rk(M)=r_M$ before any decoder is called.  Syndrome
distance, the translation subgroup, metasyndrome leaders, the orbit identities,
and the quotient unit group are then computed independently and checked against
Sections~\ref{sec:ds2} and~\ref{sec:soundness}.

The sparse metacheck basis is obtained by enumerating the nonzero vectors of
$\calC_M$, sorting by Hamming weight and then by packed integer value in the
fixed coordinate order, and greedily retaining rank-increasing rows.  Exact
repair uses breadth-first search over the $2^{r_M}$ metasyndrome states, so its
leader histogram and the curves in Fig.~\ref{fig:exact_repair} contain no
Monte Carlo uncertainty.  The meet-in-the-middle ambiguity search is exhaustive
through data weight six.

For the sustained-memory experiment, each trial uses $R=10$ noisy rounds and
one ideal final round, with independent Bernoulli data and measurement faults at
the common rate $p$.  BP+OSD uses the \texttt{ldpc} package, version 2.4.1,
minimum-sum updates, a parallel schedule, a 100-iteration cap, and OSD-CS of
order two; the channel prior equals $p$.  The two-code sweep in
Fig.~\ref{fig:sustained} uses seeds 101--170 in fixed code--decoder--$p$ order,
while the additional points in Table~\ref{tab:sustained_codes} use seeds
501--524.  Sampling stops at 100 block failures or $2\times10^4$ trials, and
uncertainty is reported with the anytime-valid confidence sequence in
\eqref{eq:anytime_cs}.

\section*{Software and Data Availability}
Companion software and machine-readable outputs 
are available at
\url{https://github.com/mohammad-rowshan/BB-Codes-Metachecks-Syndrome-Repair}.

\bibliographystyle{IEEEtran}
\bibliography{bb_annihilator_refs}

\end{document}